\documentclass[10pt,journal]{IEEEtran}
\pdfoutput=1
\normalsize
\usepackage{lipsum,amsthm,amssymb,amsmath,tikz,graphics,cite,float,graphicx,epstopdf,epsfig,verbatim,url}
\usepackage{algorithm,algpseudocode}
\usepackage{color,soul}
\IEEEoverridecommandlockouts
\IEEEpeerreviewmaketitle
\usetikzlibrary{automata}
\usetikzlibrary{shapes,arrows}
\newtheorem{lem}{Lemma}
\usepackage[utf8]{inputenc}
\newtheorem{thm}{Theorem}
\newtheorem{rem}{Remark}

\newtheorem{mydef}{Definition}
\newcommand{\BigO}[1]{\ensuremath{\operatorname{O}\bigl(#1\bigr)}}
\epstopdfsetup{prefersuffix=true}

\tikzset{
  treenode/.style = {align=center, inner sep=0pt, text centered,
    font=\sffamily},
  arn_r/.style = {treenode, circle, black, draw=black, 
    text width=1.5em, very thick},% arbre rouge noir, noeud rouge
}

%\IEEEpubid{\makebox[\columnwidth]{978-1-4799-2442-4/13/\$31.00 \copyright 2013 IEEE \hfill} 
%\hspace{\columnsep}\makebox[\columnwidth]{ }}
\begin{document}
\title{Multihop Caching-Aided Coded Multicasting for the Next Generation of Cellular Networks}
%\title{ Multihop Caching Aided Coded Multicasting in Wireless Information-Centric Networks}
%\title{ Scaling Laws of Caching Aided Coded Multicasting in Future Wireless Information-Centric Networks}
%\title{Network Coding Scaling Laws of Content Delivery in Future Wireless Information-Centric Networks}
%Scaling Laws of Caching Aided Network Coding Capacity Gains in Video Content Delivery for Future Wireless Information-Centric Networks}

\author{Mohsen~Karimzadeh~Kiskani$^{\dag}$,
        and Hamid~R.~Sadjadpour$^{\dag}$,
\thanks{Copyright (c) 2015 IEEE. Personal use of this material is permitted. However, permission
to use this material for any other purposes must be obtained from the IEEE by sending a request to
pubs-permissions@ieee.org.}
\thanks{M. K. Kiskani$^{\dag}$ and H. R. Sadjadpour$^{\dag}$ 
are with the Department of Electrical Engineering, University of California, Santa Cruz. Email: 
\{mohsen, hamid\}@soe.ucsc.edu}}% <-this % stops a
\maketitle 
\begin{abstract}
Next generation of cellular networks deploying wireless distributed femtocaching
infrastructure proposed by Golrezaei et. al. 
{are}
studied. By taking advantage of multihop communications in each cell,  
the number of required {\em {femtocaching} helpers} is significantly reduced. This reduction  
{is}
achieved by using 
{ 
the underutilized storage and communication capabilities in 
the User Terminals (UTs),}
which results in reducing the deployment costs of distributed femtocaches. 
%This can notably reduce  
%cellular network maintenance and deployment costs by putting most of the communication burden on 
%next generation of UTs which happen to have significant
%under-utilized storage and bandwidth capabilities.

A multihop index coding technique is proposed 
to code the cached contents in helpers to achieve order optimal capacity gains. 
%This can serve as an efficient content delivery algorithm for the solution provided by Golrezaei et. al. 
As an example, we consider a wireless cellular system in which contents have a popularity distribution and demonstrate that  
%Assuming that the next generation of wireless cellular systems are operating under an Information Centric Networking (ICN)
%paradigm, 
%prove that using index coding to code the cached contents in helpers and multicasting it to the 
%\emph{User Terminals (UTs)} requesting it, can be order optimal in capacity improvement. 
%It has been shown that if the contents follow a high content reuse popularity distribution, 
our approach can replace many unicast 
communications with multicast communication. 
%suggest that multiple transmissions can be avoided by
%multicasting well designed codes to the network UTs requesting contents.
We will prove that simple heuristic linear index code algorithms based on graph coloring  can achieve  order
optimal capacity under Zipfian  content popularity distribution. 
%{ Multihop communication also increases the throughput capacity of 
%the underlying mobile or vehicular network in each cell by reducing 
%the probability that mobile UTs lose their connection to specific helpers when downloading the contents.}
\end{abstract}
%\markboth{IEEE Transactions on Wireless Communications}{IEEE Transactions on Wireless Communications}
\markboth{IEEE Transactions on Vehicular Technologies}{IEEE Transactions on Vehicular Technologies}
%{Submitted paper}
\begin{IEEEkeywords}
Cellular Networks, Index Code,Femtocache.
\end{IEEEkeywords}
\IEEEpeerreviewmaketitle

\section{Introduction}
With the recent pervasive surge in using wireless devices for video and high speed data transfer, it seems eminent that
the current 
wireless cellular networks cannot be a robust solution to the ever-increasing wireless bandwidth utilization demand. Researchers
have  been recently
focused on laying down the fundamental grounds for future cellular networks to overcome such problems.

Deploying home size base
stations is proposed as a 
solution in \cite{chandrasekhar2008femtocell}. 
%However, the effectiveness of this solution is under question due to the 
%lack of cost-effective 
%backhaul connections. 
The solution in \cite{chandrasekhar2008femtocell} is based on the idea of 
femtocells in which many small cells are deployed throughout the
network to cover the entire network. Deploying 
many small cells in the network with reliable backhaul links requires significant capital investment.
Therefore,
Golrezaei et. al.  \cite{golrezaei2012femtocaching} proposed femtocaching as 
an alterante solution to overcome this problem. 
In this approach, in every cell along with the main base station, smaller 
base stations with low-bandwidth backhaul links and high storage capabilities are deployed  to create a wireless 
distributed caching infrastructure. These small base stations which are called caching helpers (or simply helpers),
will store popular contents in their caches and use their caches to serve \emph{User Terminal (UT)}
requests. Therefore, in
networks with
high content reuse, the backhaul utilization will be significantly reduced using this approach. If the requested content 
is not available in the 
helper's cache, UTs can still download the content from their low-bandwidth backhaul links to the base station. 
The proposed technique in \cite{golrezaei2012femtocaching} requires deployment of large number of femtocaches in order to cover all the nodes in the network. 

%Authors 
%{in} 
%\cite{golrezaei2012femtocaching} proposed
%efficient algorithms for helper cache placement in their solution.  

On the other hand, it is well-known that web content request popularity follows 
Zipfian-like distributions \cite{breslau1999web}. 
This content popularity distribution implies that few popular contents are widely requested by the network UTs. 
We assume UTs store their requested  contents and therefore, helpers can multicast multiple  requests by taking advantage of coding to reduce the total number
of transmissions.
%Therefore, it seems that coding the contents available in the helper cache and multicasting it to the requesting UTs can be very
%bandwidth efficient. 

In this paper, we propose to use index coding to code the contents in  helpers
before transmission. Index coding is a source coding technique 
proposed in \cite{bar2011index} which takes advantage of UTs' side information
in broadcast channels to minimize 
the required number of transmissions.  In index coding, the source (e.g. base station) 
designs codes based on the side information stored in requesting nodes. The coded information is  broadcasted  to the UTs that use the information together with
their 
cached contents to decode the desired content.
Index coding improves bandwidth utilization by minimizing the number of 
required transmissions. We propose to extend index coding 
approach from broadcast one-hop communication to multihop  scenarios
which will be explained in 
details.
%We will explain this in more details in the next section. 

Our main motivation to use index coding is the high storage availability in UTs to improve the achievable throughput of the future wireless cellular networks. 
Current improvements in high density storage systems 
has made it possible to have personal devices with 
Terabytes of storage capability. This ever-increasing trend provides future personal wireless devices with huge under-utilized 
storage capabilities. Future wireless devices can use their storage capability to store the contents that they 
have already requested. In an index coding setting, many UTs that are requesting different contents can receive a coded file which 
is multicasted to them and then each UT  uses the information in its cache to decode its requested content from the received 
coded file. 
There is an 
important equivalence between index coding and network coding as stated in \cite{el2010index, effros2012equivalence} 
and therefore, the results in this paper can also be stated based on a network coding terminology.

We will prove that index coding can be efficiently used to encode the contents by helpers 
under a Zipfian  distribution model. The encoded contents can be relayed through
multiple hops to all the UTs being
served by that helper.

The optimal index coding solution is an NP-Hard problem  \cite{langberg2011hardness}. However, we will show 
that even using 
linear index codes can result {in} order optimal capacity gains in these networks. We believe that this coding technique
can serve as a 
complement to the solution proposed in \cite{golrezaei2012femtocaching}. 
As clearly articulated in \cite{maddah2014fundamental}, in any caching problem we are faced with two  phases 
of  cache placement and cache 
delivery. While \cite{golrezaei2012femtocaching} proposes efficient cache placement algorithms, we
will be focusing on efficient 
delivery methods for their solution. We will show that the problem of delivery in \cite{golrezaei2012femtocaching} can be efficiently 
addressed by using index coding in 
the helpers.% and relaying the coded contents in the network. 

Recent discussions on standards for future 5G cellular networks are focused on providing 
 high bandwidth for Device-to-Device communications (D2D). Examples of such approaches are the  
 IEEE 802.11ad standard (up to 60GHz \cite{ieee80211ad}) and the millimeter-wave proposal 
which can potentially enable up to 300GHz of bandwidth for D2D communications
\cite{boccardi2014five,mmrange}. This potential abundant D2D
bandwidth can be  utilized to relay the coded contents inside an ad hoc network which is being served by a
helper. It is indeed such excessive storage and bandwidth capabilities of future wireless systems that make our solution 
feasible. 
%To perform index coding along with multihop communications, future cellular networks can rely on high storage capacity of UTs
%and high data rate D2D communication.
%%%
%{ 
%The abundant available D2D bandwidth allows a much broader group of UTs 
%to download contents from a specific helper through multihop communications. This will reduce the   
%probability that a mobile UT falls out of the reach of a specific helper. 
%Therefore, our solution 
%can significantly improve the capacity and performance of future mobile (or vehicular) networks by reducing the
%handover probability between the helper nodes. 
%}

Deploying many femtocaching helpers is not economically efficent. 
%Furthermore, in networks in which smaller numbers of UTs are requesting 
%contents, the deployed helpers may not be fully utilized. 
On the
other hand, since UTs have significant D2D capabilities, they can 
efficiently participate in content delivery through multihop D2D 
communications. In our proposal, we suggest to deploy few helpers 
which can deliver the contents to neighboring UTs through multihop 
D2D communications. This can reduce the network deployment and 
maintenance costs.

The rest of this paper is organized as follows. Section \ref{relwork} reviews the related works and 
section \ref{netmodel} describes the proposed network model that is similar to \cite{golrezaei2012femtocaching} with the addition of using multihop communications and 
index coding. In section 
\ref{mohsensec}, we will explain the scaling laws of capacity improvement  using index coding and relaying. Section \ref{graphcoloringsec} demonstrates  
that index coding algorithms can achieve order optimal gains. Section \ref{discuss} describes 
the simulation results and the paper is concluded in section \ref{conclude}.

\section{Related work}
\label{relwork}
% In this work, we have used the wireless distributed network infrastructure suggested by \cite{golrezaei2012femtocaching}. 
% Our main contributions over \cite{golrezaei2012femtocaching} include the assumption of multihop communications which significantly 
% reduces the infrastructure maintenance and deployment costs by reducing the required number of helpers. Our next important 
% contribution is to propose index coding by helpers. This significantly improves the efficiency of content delivery by transforming 
% multiple unicast communications into multicast \cite{wang2009fundamental}. 
The problem of 
caching when a server is transmitting contents to clients was studied in \cite{maddah2014fundamental} from an 
information theoretic point of view. The authors introduced 
two phases of {\em cache placement} and {\em content delivery}.  
For a femtocaching solution, efficient cache placement
 algorithms are proposed 
 in \cite{golrezaei2012femtocaching}. 
  In this paper, we focus on efficient content delivery algorithms 
 through index coding and 
 multihop D2D communications for a femtocaching solution.

There has been significant research on index coding since it was proposed 
in \cite{bar2011index}. The practical implementation of index coding for 
wireless applications was proposed in  \cite{chaudhry2011complementary} 
by proposing cycle counting methods.
A dynamic index coding solution for wireless
broadcast channels is  proposed in \cite{neely2013dynamic}. 
In \cite{neely2013dynamic}, a wireless broadcast station is considered and a 
simple set of codes based on cycles in the
dependency graph is provided. They show the optimality of these codes for a
class of broadcast relay problems. In this paper, we prove 
that codes based on cycles can acheive order optimal capacity gains in 
networks with Zipfian content request distribution.

Approximating index coding solution is proved \cite{bar2011index} to be  an NP-Hard problem. However, 
efficient heuristics has been proposed in \cite{chaudhry2008efficient} which are based on well-known graph coloring
algorithms. 
%In this work, we will use graph coloring based heuristics to find 
%simple linear index codes. 
Other references like 
\cite{el2010index} and \cite{effros2012equivalence} have studied the 
connections and equivalence between index coding and network coding. 
Tran et al. \cite{tran2009hybrid}  studied a single hop wireless link 
from a network coding approach and showed similar results to 
index coding.
Ji et al. \cite{ji2014fundamental} studied theoretical limits of caching in 
D2D communication networks.

Study of coding techniques in networks with high content reuse has recently attracted the attention of researchers. 
Montpetit et al. \cite{montpetit2012network}  studied the applications of network coding in Information-Centric Networks (ICN). 
Wu et al. \cite{wu2013codingcache}  studied network coding in Content Centric Networks (CCN) which is an implementation of ICN. 
Leong et al. \cite{leong2009optimal}  proposed a linear programming formulation to deliver contents optimally in today's IP based Content Delivery Networks (CDN) using network coding.
Llorca et at. \cite{llorca2013network} proposed 
a network-coded caching-aided multicasting technique for efficient content delivery in CDNs. This paper extends Index Coding to multihop communication and
 shows that  linear codes 
can achieve order 
optimal capacity gains in such networks. %Our work can be also formulated based on network coding approach.
%We will also provide practical coding techniques based on graph coloring algorithms in 
%such networks. 
%Our coding heuristics propose simple linear index codes and can achieve 
%order optimal capacity gains. 

%%%%%%%%%%%%%%%%%%%%%%%%%%
%%%%%%%%%%%%%%%%%%%%%%%%%%

\section{Network Model}
\label{netmodel}
%In this section, we will explain the network model that 
% we will study.
 We assume a network model for future wireless cellular
 networks  in which femtocaching helpers with significant storage capabilities 
 are deployed throughout the network to assist the efficient delivery of contents to 
UTs through reliable D2D communications. Such helpers are 
characterized with low rate backhaul links which can be wired or wireless.
They will also have localized, high-bandwidth communication capabilities. 
With their significant storage capacity, 
in a network with high content reuse, many of the content requests for the UTs inside a cell can be satisfied directly 
by the helpers. %This approach allows a wireless distributed caching infrastructure to  satisfy most of the 
%content requests and reduce the traffic load of the main base station in a cell.

Each helper is serving a wireless ad-hoc network in which the UTs are utilizing  high bandwidth 
D2D communication techniques such as  millimeter wave and IEEE 802.11ad technologies.  
%\cite{boccardi2014five, rappaport2013millimeter, roh2014millimeter,ieee80211ad}. 
This high bandwidth D2D communication enables 
the UTs to relay  data from a nearby helper to all the UTs that are within transmission range. We assume that 
the path lifetime between the helper and UTs is longer than the time required to transmit the content. For  
large files and when UTs are moving fast, one solution is to divide the content into  smaller files and treat 
each file separately. 
%Future work can focus on more dynamic networks and the ramification of mobility on the performance of this approach.
The paper assumes that the network is connected which can be justified by 
the large number of UTs that will be available in the future wireless 
cellular networks.
%In \cite{golrezaei2012femtocaching}, 

%that helpers  only communicate with  all UTs through a direct link. 
%In this paper, we  relax this assumption and assume that a helper  
%only communicate 
%with a fraction of UTs that are located within transmission range. 
%The requests for the rest of UTs are satisfied through relaying and multihop 
%communications 
%inside the wireless ad-hoc network. 
%
%This assumption requires a connected network and since it is predicted that future wireless networks are dense, this assumption is reasonable. 
%Since communication inside the ad-hoc network is a D2D communication, it can be carried out 
%in very high data rates and is 
%therefore more than enough for content relaying. 

%Unlike baseline single hop D2D communication approach discussed earlier, we assume multihop D2D  communications between UTs and helpers. Multihop communication allows abundant D2D bandwidth to
%be effectively utilized for delivering contents from helpers to UTs.

Further, we take advantage of index coding and the side information cached in UTs 
to significantly reduce the required number of 
helpers and consequently, reduce the infrastructure maintenance and deployment costs. To demonstrate the effectiveness of 
using multihop communications, we consider similar assumptions as  \cite{golrezaei2012femtocaching} with 
a macro base station placed in the center of a cell with radius 400 meters 
serving 1000 UTs and a transmission range of 100 meters \cite{mmrange} for D2D 
communication. As shown in 
Figure \ref{fig_multihop},
with only 
{4}
helpers uniformly located in the cell, 100\% and 80\% of nodes are covered
with 3 and 2 hop communications respectively. Covering all the UTs in the
same cell with only one hop communication requires up to 27 helpers
\cite{golrezaei2012femtocaching}.
\begin{figure}[http]
    \center
      \includegraphics[scale=0.5,angle=0]{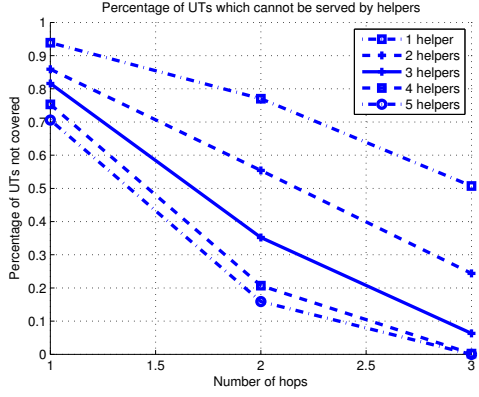}
      \caption{{Pertentage of UTs not covered versus the maximum number of hops traveled. This 
      simulation is carried over a cell with radius 400 meters and with a communication range of 
       100 meters.}}
 \label{fig_multihop}
\end{figure}
Clearly, for a vehicle or a mobile UT operating in the cell, the handover
probability will be 
significantly reduced if the number of helpers shrinks from 27 to 4.

We assume that $n$ UTs denoted by $\mathbb{N}=\{N_1,N_2,...,N_n\}$ are being served by a helper. There are 
 $m$ contents $\mathbb{M}=\{M_1,M_2,...,M_m\}$  available  with $M_1$ as the most popular content  and $M_m$ as the least popular content in the network .
%. Without loss of generality, we assume that $m$ can be a very 
%large number. 
%Assume that the popularity decreases with the index number so that 
%$M_1$ is the most popular content and $M_m$ is the least popular content in the network.
Let's assume UT $N_i$ requests a content with popularity index $r_i$ in the current time interval. 
Each UT has a cache of fixed size $\delta$ in which contents with indices 
$C_i=\{{c_{i1}},...,{c_{i\delta}}\}$ are stored. Therefore, we assume that UT $N_i$ 
caches contents $M_{{c_{i1}}},M_{{c_{i2}}},\dots,M_{{c_{i\delta}}}$ and the set of  cached 
content indices in $N_i$ is represented by $C_i$. Therefore, if we denote 
the set of cached contents in UT $N_i$ by $\mathcal{M}_i$, then we have 
$\mathcal{M}_i = \{M_{j}~|~ j \in C_i\}$.
The requested content index is shown by 
$r_i$.

Let's assume that we have $n$ UTs each with a 
set of side information $\mathcal{M}_i$. The formal definition of index code described in  \cite{bar2011index} is given below.
\begin{mydef}{\em 
 An {\em index code} on a set of $n$ UTs each with a side information 
set $\mathcal{M}_i \subseteq \mathbb{M}$, and a requesting content 
${M}_{r_i} \in \mathbb{M}$ for $i=1,2,\dots,n$ is 
defined as a set of codewords 
in $\{0,1\}^l$ toghether with 
\begin{enumerate}
 \item An encoding function $\mathcal{E}$ mapping inputs 
 in $\{0,1\}^m$ to codewords and 
 \item A set of decoding functions $\mathcal{D}_1,\mathcal{D}_2,\dots,\mathcal{D}_n$
 such that $\mathcal{D}_i (\mathcal{E}(\mathbb{M}), \mathcal{M}_i) 
 = {M}_{r_i}$ for $i=1,2,\dots,n$.
\end{enumerate}
}\label{indx_def} 
\end{mydef}
In the above definition, the length of the index code $l$ denotes the 
number of required transmissions to satisfy all the content requests of 
the $n$ UTs. The encoding function $\mathcal{E}$ is applied to the 
contents by the helper and the decoding functions $\mathcal{D}_i$s  
are applied individually by UTs to decode their desired contents from 
the encoded content using their cached information.

Figure \ref{fig_ex_netwoork} shows a helper $H$ serving 6 UTs $N_1, N_2, N_3, N_4, N_5,$ 
and $N_6$. 
Let's assume UTs $N_1, N_2$ and $N_5$  request  contents 
$M_3, M_1$ and $M_4$ while storing $\{M_1,M_4\}$, $\{M_3,M_4\}$ and 
$\{M_1,M_3\}$ respectively. Using 
index coding requires 3 channel usages while without index coding, we need 5 channel usages. 
This is true since using index coding, the helper creates the XOR 
combination of contents $M_1, M_3$ and $M_4$ as $M_1 \oplus M_3 \oplus M_4$ and broadcasts
this coded content to its neighboring UTs. Either of $N_1$ and $N_2$ can immediately
reconstruct their requested content from this coded content. 
For instance, $N_1$ which is requesting $M_3$ can decode its requested content by using
it's cached information and XOR operation on the encoded message to retrieve the 
requested content $M_3$, i.e.,  $(M_1 \oplus M_3 \oplus M_4) \oplus M_1 \oplus M_4 = M_3$.
After $N_2$ receives $M_1 \oplus M_3 \oplus M_4$, it relays it to  node $N_4$ which 
 simply broadcasts it to $N_5$. UT $N_5$ will again use XOR operation to decode it's 
desired content $M_4$ by adding it's cached contents to the coded content that it has received, 
i.e.,  $(M_1 \oplus M_3 \oplus M_4) \oplus M_1 \oplus M_3 = M_4$. Therefore, %to satisfy all the 
%requests, the helper only needs to broadcast $M_1 \oplus M_3 \oplus M_4$ and then $N_2$ and 
%$N_4$ need to relay this coded content to $N_5$. This way using index coding, 
only 3 
transmissions are needed. 
 
To satisfy the content requests through multihop D2D  without 
index coding, 5 transmissions are needed as the helper 
should transmit $M_3$ to $N_1$, $M_1$ to $N_2$, 
$M_4$ to $N_2$, and $N_2$ relays 
$M_4$ to $N_4$ and $N_4$ relays it to $N_5$.
\begin{figure}
\centering
 \begin{tikzpicture}[->,>=stealth',shorten >=1pt,auto,node distance=1cm, semithick]
  \node[rectangle,draw]  (H) {H};
  \node[circle,draw]  (A0) [right of=H] {};% {\footnotesize{$N_1$}};
  \node[circle,draw]  (A1) [above right of=H] {};% {\footnotesize{$N_1$}};
  \node[circle,draw]  (A2) [below right of=H] {};% {\footnotesize{$N_1$}};
  \node[circle,draw]  (A3) [right of=A0] {};% {\footnotesize{$N_1$}};
  \node[circle,draw]  (A4) [above right of=A3] {};% {\footnotesize{$N_1$}};
  \node[circle,draw]  (A5) [below right of=A3] {};% {\footnotesize{$N_1$}};
  
  \node (T1) [right of = A1, xshift=-0.6cm] {\footnotesize{$N_1$}};
  \node (T2) [right of = A2, xshift=-0.6cm] {\footnotesize{$N_3$}};
  \node (T3) [above of = A0, yshift=-0.7cm, xshift=0.2cm] {\footnotesize{$N_2$}};
  \node (T4) [above of = A3, yshift=-0.6cm] {\footnotesize{$N_4$}};
  \node (T5) [right of = A4, xshift=-0.6cm] {\footnotesize{$N_5$}};
  \node (T6) [right of = A5, xshift=-0.6cm] {\footnotesize{$N_6$}};
  
  \path%[dashed,->]
  (H) edge node {} (A0);
  \path%[dashed,->] 
  (H) edge node {} (A1);
  \path%[dashed,->] 
  (H) edge node {} (A2);
  \path (A0) edge node {} (A3);
  \path (A3) edge node {} (A4);
  \path (A3) edge node {} (A5);
\end{tikzpicture}
  \caption{Example of a wireless multihop network being served by the helper $H$. Each arrow represents a  link with high bandwidth 
  D2D communication capability.}
  %and dotted arrows show the high bandwidth communication between the helper and its neighboring UTs.}
  \label{fig_ex_netwoork}
\end{figure}
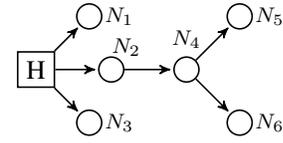

We assume a Zipfian distribution with parameter $s > 1$ for content popularity distribution in the network. This means that the 
probability that UT $N_i$ requests any content with index $r_i$ at any time instant is given by 
\begin{equation}
 \label{ebsdf}
 \textrm{Pr}[N_i ~\textrm{requests content with index}~ r_i] = \frac{r_i^{-s}}{H_{m,s}},
\end{equation}
where $H_{m,s} = \sum_{j=1}^{m} \frac{1}{j^s}$ denotes the $m^{th}$ generalized harmonic number with parameter $s$. 

\begin{rem} {\em 
This paper only focuses on Zipfian content request 
 probability with parameter $s>1$ which is a {\em non-heavy-tailed} probability distribution. 
 Our results are correct for any type of non-heavy-tailed  probability distribution and 
 the extension to heavy-tailed probability distribution is the subject of future work.}
\label{rem_prob_explain}
\end{rem}

Dependency graph  is a useful analytical tool \cite{bar2011index, el2010index, chaudhry2011complementary} that is widely used in index coding literature.
\begin{mydef}{\em
 \emph{(Dependency Graph):} Given an instance of an index coding problem, the dependency graph\footnote{Dependency graph 
 is a directed graph.} 
 $\overrightarrow{G}(V,E)$ is defined as 
 \begin{itemize}
  \item each UT $N_i$ corresponds to a vertex  in $V$, $N_i \in V$, and 
  \item there is a directed edge in $E$ from $N_i$ to $N_j$ if and only if $N_i$ is requesting a 
  content that is already cached in $N_j$. 
 \end{itemize}
 } \label{depgraph}
\end{mydef}
This dependency graph does not represent the actual physical links between UTs in the network. This is a virtual graph in which
each edge represents the connection between a UT that is requesting a content and a UT that caches this content. 
%Dependency graph shows the connections between cached contents and contents that are being requested. 
As discussed in \cite{chaudhry2011complementary,neely2013dynamic}, every cycle in the dependency 
graph is representative of a connection between UTs and it can save one transmission. For every clique
in the dependency graph, all the requesting UTs in the clique can be satisfied by a simple linear XOR index code. 
The complement of dependency graph is called \emph{conflict graph}. 
This graph is  of significant interest since any clique in the dependency graph gives rise to an independent 
set\footnote{An independent set is a set of vertices in a graph for which none of the vertices are connected by an edge.}
in the conflict graph. Therefore, well-known graph coloring algorithms over conflict graph can be used to find simple linear 
{XOR}
index codes.
The dependency and conflict graphs in our network are random directed graphs. 
In the next section, we will use the properties of these graphs 
to find the capacity gains and propose simple index coding solutions.  

To prove our results in this paper, we have used Least Recently Used (LRU) 
 or Least Frequently Used (LFU) cache  policies. Similar results can be 
produced for other caching policies. LRU caching policy  assumes that 
most recently requested contents 
are kept in the cache. 
In LRU caching, in each time slot the content that is requested in the previous time slot is 
stored in the first cache location. If this content was already available in the cache, it is 
moved from that location to the first cache location. If the content was not 
available, then the content that is least recently used is discarded and the most recently
requested content is cached in the first cache location. Any other content is relocated to a 
new cache location such that the contents appear in the order that they have been requested.

In LFU caching policy, the contents are cached based on their request frequency. Highly popular 
contents are stored in the first locations of the cache and contents with lower request frequency are 
cached in the bottom locations of the cache. 

Computing the probability of having content with index $r_{i}$ in cache $C_{j}$, $\textrm{Pr}[r_{i} \in C_{j}]$,
turns out to be  complicated for LRU { or LFU} caching policies. 
A  simple lower bound on this probability for LRU
can be found by noticing that 
$\textrm{Pr}[r_{i} \in C_{j}]$ is 
greater than the probability that UT $N_j$ have requested the content with index $r_{i}$ in the most recent time slot
and therefore it is located at the top of the  
cache. This lower bound can be derived using equation \eqref{ebsdf}.
\begin{align}
 \textrm{Pr}[r_i \in C_j] &= 
  \textrm{Pr}[c_{j1} = r_i] + \sum_{l=2}^{\delta} \textrm{Pr}[c_{jl} = r_i]          \nonumber \\
% &=& \textrm{Pr}[c_{j1} = r_i] + \sum_{l=2}^{\delta} \textrm{Pr}[c_{jl} = r_i]  \nonumber \\
 &\ge \textrm{Pr}[c_{j1} = r_i] = \frac{r_i^{-s}}{H_{m,s}}. %\nonumber \\
%  &=& \textrm{Pr}[N_j ~\textrm{requests}~ r_i~\textrm{in the last time slot}] \nonumber \\
%  &=& \frac{r_i^{-s}}{H_{m,s}},
% &=& \frac{r_i^{-s}}{H_{m,s}}, 
 \label{validating_eq}
\end{align}
To prove that the same lower bound holds for LFU, notice that $\textrm{Pr}[r_{i} \in C_{j}]$ 
is larger than the same probability when the cache size is $M=1$. Therefore, 
\begin{equation}
 \textrm{Pr}[r_i \in C_j]  \ge \textrm{Pr}[r_i \in C_j~|~M=1] 
 = \textrm{Pr}[c_{j1} = r_i]   = \frac{r_i^{-s}}{H_{m,s}}. 
 \label{validating_eq2}
\end{equation}
%%%%%%%
In the subsequent sections, we will use these lower bounds to prove our results. 

In this paper, we will state our results in terms of order bounds. To avoid any confusion, we use the following order notations \cite{knuth1976big}.
We denote $f(n)=\operatorname{O}(g(n))$ if there exist $c>0$ and $n_0>0$ such that $f(n)\leq c g(n)$ for all $n\geq n_0$, $f(n)=\Omega(g(n))$ if $g(n)=\operatorname{O}(f(n))$, and $f(n)=\Theta(g(n))$ if $f(n)=\operatorname{O}(g(n))$ and $g(n)=\operatorname{O}(f(n))$.

\section{Order optimal capacity gain}
\label{mohsensec}

In this section, we will prove that index coding can significantly 
decrease the number of transmissions
in the network. We will 
specifically use the Zipfian 
content distribution in the underlying content distribution network. To do so, 
we will first state and prove the following  lemma. 
\begin{lem}
 \label{hproblem}{\em
 Let's consider a Zipfian content distribution with parameter $s > 1$ and parameter $h_{\epsilon} = {\epsilon}^{\frac{1}{1-s}}$ where $0<  \epsilon < 1$.
% For every $0<  \epsilon < 1$, let  
% $h_{\epsilon} = {\epsilon}^{\frac{1}{1-s}}$. 
For 
 every $i$ with popularity index $r_i$ that is less than $h_{\epsilon}$, the request probability is at least $1-\epsilon$.
%, the requested content has a popularity index $r_i$ which is lower than
% $h_{\epsilon}$ with probability at least $1-\epsilon$. 
}\end{lem}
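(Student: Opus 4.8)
The plan is to restate the claim as a statement about the \emph{cumulative} request probability. Writing $r$ for the requested popularity index, it suffices to show that requesting one of the most popular contents---those with index below $h_\epsilon$---happens with probability $\textrm{Pr}[r < h_\epsilon] \ge 1-\epsilon$, or equivalently that the complementary tail mass obeys $\textrm{Pr}[r \ge h_\epsilon] \le \epsilon$. By the Zipfian law \eqref{ebsdf}, this tail probability is the ratio
\begin{equation}
\textrm{Pr}[r \ge h_\epsilon] = \frac{\sum_{k=\lceil h_\epsilon \rceil}^{m} k^{-s}}{\sum_{j=1}^{m} j^{-s}},
\end{equation}
so the argument reduces to comparing a tail sum of $k^{-s}$ against the full generalized harmonic number $H_{m,s}$.

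The key step is to replace both sums by integrals of the decreasing function $x^{-s}$. Since $s>1$, the integral $\int_a^\infty x^{-s}\,dx = \frac{a^{1-s}}{s-1}$ converges, and monotonicity of $x^{-s}$ yields the standard sandwich $\int_N^\infty x^{-s}\,dx \le \sum_{k=N}^\infty k^{-s} \le \int_{N-1}^\infty x^{-s}\,dx$ for the numerator, together with the lower bound $H_{m,s} \ge \int_1^{m+1} x^{-s}\,dx = \frac{1-(m+1)^{1-s}}{s-1}$ for the denominator, which tends to $\frac{1}{s-1}$ as $m$ grows. The decisive observation is that the factor $\frac{1}{s-1}$ appears in both numerator and denominator and therefore cancels; in the continuous idealization this gives
\begin{equation}
\textrm{Pr}[r \ge h_\epsilon] \approx \frac{\int_{h_\epsilon}^\infty x^{-s}\,dx}{\int_1^\infty x^{-s}\,dx} = \frac{h_\epsilon^{1-s}/(s-1)}{1/(s-1)} = h_\epsilon^{1-s}.
\end{equation}

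The final step invokes the definition of the threshold. Since $h_\epsilon = \epsilon^{\frac{1}{1-s}}$, raising both sides to the power $1-s$ gives the identity $h_\epsilon^{1-s} = \epsilon$; substituting this into the displayed ratio shows the tail mass equals $\epsilon$, hence $\textrm{Pr}[r < h_\epsilon] \ge 1-\epsilon$, which is the claim.

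I expect the main obstacle to be the bookkeeping of the constant $\frac{1}{s-1}$ and of the ceiling $\lceil h_\epsilon \rceil$ in the genuinely discrete, finite-$m$ sums. The clean cancellation that produces exactly $\epsilon$ holds for the integral idealization, whereas the discrete integral sandwich introduces lower-order corrections---a unit shift of an integration limit, and the gap between $H_{m,s}$ and $\frac{1}{s-1}$. The careful part is to orient each comparison so that these corrections push the ratio in the favorable direction and confirm $\textrm{Pr}[r \ge h_\epsilon] \le \epsilon$ rather than merely $\Theta(\epsilon)$. Here the hypothesis $s>1$ is essential: it is exactly the condition that makes both integrals converge and the tail genuinely light, matching the non-heavy-tailed assumption flagged in Remark~\ref{rem_prob_explain}.
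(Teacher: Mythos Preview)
Your reformulation is exactly right: the lemma is the tail bound $\textrm{Pr}[r>h_\epsilon]=\dfrac{1}{H_{m,s}}\sum_{k>h_\epsilon}k^{-s}\le\epsilon$, and the target inequality reduces to showing this ratio is at most $h_\epsilon^{1-s}$. Where your plan and the paper diverge is in how that ratio is controlled.

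Your integral-comparison route gives the correct order but, contrary to your hope, the discrete corrections cannot all be ``oriented favorably'' to recover the sharp constant. Concretely, the best you get is
\[
\sum_{k=h_\epsilon+1}^{m}k^{-s}\le\int_{h_\epsilon}^{\infty}x^{-s}\,dx=\frac{h_\epsilon^{1-s}}{s-1},\qquad
H_{m,s}\ge\int_{1}^{m+1}x^{-s}\,dx=\frac{1-(m+1)^{1-s}}{s-1},
\]
so the ratio is bounded by $h_\epsilon^{1-s}/\bigl(1-(m+1)^{1-s}\bigr)$, which is strictly larger than $h_\epsilon^{1-s}$ for every finite $m$; using $H_{m,s}\ge 1$ instead leaves an uncanceled $1/(s-1)$. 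Either way you land at $\Theta(\epsilon)$, not $\le\epsilon$, which is the very obstacle you anticipated but did not remove.

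The paper sidesteps this with a purely discrete, self-similar block decomposition rather than integrals: it slices the tail into consecutive blocks of length $h_\epsilon$, bounds the $i$-th block by $h_\epsilon\cdot(ih_\epsilon)^{-s}=i^{-s}h_\epsilon^{1-s}$, and observes that the resulting sum $\sum_{i\ge 1}i^{-s}$ (truncated at $\lfloor m/h_\epsilon\rfloor$) is itself at most $H_{m,s}$. The denominator $H_{m,s}$ then cancels \emph{exactly}, leaving the clean bound $h_\epsilon^{1-s}=\epsilon$ with no residual constant and no dependence on $m$. That cancellation---a partial harmonic number in the numerator against the full one in the denominator---is the idea your sketch is missing.
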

\begin{proof}
 \label{prroof}
Based on the Zipfian distribution assumption and equation \eqref{ebsdf}, 
{ the probability that the requested content has a popularity of at most 
$h_{\epsilon}$ is equal to 
\begin{align}
 \textrm{Pr}[r_j \le h_{\epsilon}] = \frac{H_{h_{\epsilon},s}}{H_{m,s}}. 
 \label{request_prob}
\end{align}
In order to satisfy $\textrm{Pr}[r_j \le h_{\epsilon}] \ge 1 - \epsilon$, we should 
have 
\begin{align}
 \epsilon &\ge 1- \textrm{Pr}[r_j \le h_{\epsilon}] 
 = 1- \frac{H_{h_{\epsilon},s}}{H_{m,s}} 
 = \dfrac{1}{H_{m,s}} \sum_{j=h_{\epsilon}+1}^m j^{-s} \nonumber \\
 &= \dfrac{1}{H_{m,s}} \sum_{i=1}^{\lfloor \frac{m}{h_{\epsilon} }\rfloor -1 } 
 \sum_{j=i h_{\epsilon}+1}^{(i+1)h_{\epsilon}} j^{-s}
 + \dfrac{1}{H_{m,s}} \sum_{j=\lfloor \frac{m}{h_{\epsilon} }\rfloor
 h_{\epsilon}+1}^{m} j^{-s}.
 \label{eqs_proof_lem1}
\end{align}
If we have 
\begin{align}
 \epsilon &\ge 
 \dfrac{1}{H_{m,s}} \sum_{i=1}^{\lfloor \frac{m}{h_{\epsilon} }\rfloor } 
 \sum_{j=i h_{\epsilon}+1}^{(i+1)h_{\epsilon}} j^{-s},
 \label{eqs_proof_lem2}
\end{align}
then the inequality in \eqref{eqs_proof_lem1} will certainly hold since the right 
hand side in \eqref{eqs_proof_lem2} is larger than the right hand side in 
\eqref{eqs_proof_lem1}. Note that, for 
$i h_{\epsilon}+1 \le j \le (i+1)h_{\epsilon}$ we have 
$j^{-s} < (i h_{\epsilon})^{-s}$. Hence, 
\begin{align}
 \sum_{j=i h_{\epsilon}+1}^{(i+1)h_{\epsilon}} j^{-s} <  
 \sum_{j=i h_{\epsilon}+1}^{(i+1)h_{\epsilon}} (i h_{\epsilon})^{-s} =
 h_{\epsilon} (i h_{\epsilon})^{-s} = i^{-s} h_{\epsilon}^{1-s} 
 \label{eqs_proof_lem3}
\end{align}
This means that if 
\begin{align}
 \epsilon &\ge 
% \dfrac{1}{H_{m,s}} \sum_{i=1}^{\lfloor \frac{m}{h_{\epsilon} }\rfloor } 
%i^{-s} h_{\epsilon}^{1-s} = 
h_{\epsilon}^{1-s}
\dfrac{1}{H_{m,s}} \sum_{i=1}^{\lfloor \frac{m}{h_{\epsilon} }\rfloor } i^{-s},
 \label{eqs_proof_lem4}
\end{align}
then \eqref{eqs_proof_lem2} holds since the right hand side of inequality  \eqref{eqs_proof_lem2}
is smaller than the right hand side of inequality  \eqref{eqs_proof_lem4}
as shown by  \eqref{eqs_proof_lem3}. 
Notice that since,
\begin{align}
\dfrac{1 }{H_{m,s}}\sum_{i=1}^{\lfloor \frac{m}{h_{\epsilon} }\rfloor } i^{-s} 
=\dfrac{\sum_{i=1}^{\lfloor \frac{m}{h_{\epsilon} }\rfloor } i^{-s}}
{\sum_{i=1}^{\lfloor \frac{m}{h_{\epsilon} }\rfloor } i^{-s} 
+\sum_{\lfloor \frac{m}{h_{\epsilon} }\rfloor+1 }^m i^{-s}} < 1,
 \nonumber
\end{align}
if $h_{\epsilon}$ is chosen such that 
\begin{align}
 \epsilon &\ge h_{\epsilon}^{1-s},
 \label{eqs_proof_lem6}
\end{align}
then all of the inequalities in
equations \eqref{eqs_proof_lem1}, \eqref{eqs_proof_lem2},
\eqref{eqs_proof_lem3} and \eqref{eqs_proof_lem4} 
will be valid and hence $\textrm{Pr}[r_j \le h_{\epsilon}] \ge 1 - \epsilon$.
%holds, then \eqref{eqs_proof_lem4} holds and then subsequently
%\eqref{eqs_proof_lem2} and \eqref{eqs_proof_lem1} hold. 
Therefore, in order to have 
$\textrm{Pr}[r_j \le h_{\epsilon}] \ge 1 - \epsilon$, it is enough to 
choose $h_{\epsilon}$ such that \eqref{eqs_proof_lem6} is valid. Hence, if 
$h_{\epsilon}$ is chosen to be at least equal to 
\begin{align}
  h_{\epsilon} = {\epsilon}^{\frac{1}{1-s}},
 \label{eqs_proof_lem7}
\end{align}
then we have $\textrm{Pr}[r_j \le h_{\epsilon}] \ge 1 - \epsilon$.  Notice that 
the choice of $h_{\epsilon}$ in \eqref{eqs_proof_lem7} is such that it only
depends on $\epsilon$ and is independent of $m$.
\iffalse
If $s > 1$, we have $H_{m,s} < H_{\infty,s} = \zeta(s)$ where $\zeta(.)$ denotes the Reimann Zeta function.
If we choose $h$ to be the first integer such that $H_{h,s} \ge \zeta(s) (1- \epsilon)$, we are guaranteed to have 
$\textrm{Pr}[r_i \le h] \ge 1-\epsilon$. Notice that $h$ in this case can be chosen independently of $m$, i.e., $h=\Theta(1)$
and it only depends on $s$ and $\epsilon$.
\fi
}
\end{proof}
For instance for $s=2$ and $\epsilon=0.01$, $h_{\epsilon}$ can be chosen as 100 
(regardless of the size of $m$). This implies that for a Zipfian 
distribution with $s=2$, 100 highly popular contents among any large number of contents 
would account for 99\% of the total content requests.  Therefore, 
if $h_{\epsilon}$ is chosen as in equation \eqref{eqs_proof_lem7},
with a probability of at least $1-\epsilon$  all content
requests have popularity index of at most $h_{\epsilon}$. Now define
$p_{\epsilon}$ as
 \begin{equation}
 \label{edgeproblowerbound}
 p_{\epsilon} \triangleq \dfrac{h_{\epsilon}^{-s}}{H_{m,s}}
 = \dfrac{{\epsilon}^{-\frac{s}{1-s}}}{H_{m,s}}.
\end{equation}
Based on above discussion, in our instance of index coding dependency graph,
 with a probability of at least $1-\epsilon$ edges are present with a 
probability of at least $p_{\epsilon}$. We will discuss this in more details 
later in the proof for Theorem \ref{indexlimit}.  Notice that for large 
values of $m$, $p_{\epsilon}$ is also independent of $m$ since in that case 
$H_{m,s} \approx \zeta(s)$ and $p_{\epsilon}$ only depends on $\epsilon$ and $s$.

As stated in \cite{chaudhry2011complementary}, if we choose the right encoding vectors for any index coding problem,
for any vertex disjoint cycle in the dependency graph we can save one transmission. 
Therefore, 
the number of vertex-disjoint cycles\footnote{These are the cylces that do not have any common vertex.} 
in the dependency graph can serve as a lower bound for the number 
of saved transmissions in any index coding problem. 
Number of vertex disjoint cycles is also used in \cite{neely2013dynamic} as a way of finding the lower bound for
index coding gain. To count the number of vertex-disjoint cycles in our random dependency graph, we will use the 
following lemma originally proved as Theorem 1 in \cite{erdHos1962maximal}.
%(In other words, for a large enough value of $h$, Pr$[r_j \le h] \approx 1$ for $1 \le \forall j \le m$).
\begin{lem}
\label{lemerdos}
\emph{
Let $d > 1$ and $v \ge 24d$ be integers. Then any graph $G^{v}_{f(v,d)}$ with $v$ vertices and at least 
$f(v,d)= (2d-1)v - 2d^2+d$ edges   
% \begin{equation}
%  \label{fnkerdos}
%  f(m,c) = (2c-1)m - 2c^2+c,
% \end{equation}
 contains $d$ disjoint cycles or $2d-1$ vertices of degree $v-1$.% (its structure is then uniquely determined).
 \footnote{Clearly, this lemma is valid when the number of edges is more than $f(v,d)$.}
}
\end{lem}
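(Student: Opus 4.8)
\emph{Proof plan.} Since this is Theorem~1 of \cite{erdHos1962maximal}, I would reconstruct its proof by induction on $d$, working with the equivalent formulation: if $G$ has $v \ge 24d$ vertices, at least $f(v,d)$ edges, and \emph{fewer} than $2d-1$ vertices of degree $v-1$, then $G$ contains $d$ vertex-disjoint cycles. The guiding intuition is the tight extremal example: take a set $S$ of $2d-1$ vertices, join every vertex of $S$ to all other $v-1$ vertices, and leave $V \setminus S$ independent. Every cycle then meets $S$ in at least two vertices, so $d$ disjoint cycles would require $2d$ vertices of $S$; hence this graph has no $d$ disjoint cycles, and a direct count gives it exactly $\binom{2d-1}{2} + (2d-1)(v-2d+1) = (2d-1)v - 2d^2 + d = f(v,d)$ edges while exhibiting precisely $2d-1$ vertices of degree $v-1$. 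This tells me the second alternative is forced exactly at the extremal graph, and that any proof must route every ``missing'' universal vertex into an extra disjoint cycle.

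First I would record the degree bound implied by the hypothesis: the average degree is $2f(v,d)/v \ge (4d-2) - (4d^2-2d)/v$, and since $v \ge 24d$ the subtracted term is at most $d/6$, so $G$ has average degree of order $4d$. A graph this dense on $v$ vertices contains a \emph{short} cycle $C$ (its girth is $\BigO{\log_{d} v}$). The engine of the induction is then: delete $V(C)$ together with all incident edges to obtain $G'$ on $v' = v - \ell$ vertices, where $\ell = |V(C)|$, apply the induction hypothesis to $G'$ with parameter $d-1$ to extract $d-1$ disjoint cycles, and adjoin $C$. The base case $d=2$ must be argued directly, since the $d=1$ statement genuinely fails (a path has $v-1$ edges, no cycle, and maximum degree $2$); equivalently, at the bottom level the residual graph need only contain a single cycle, which holds as soon as it has more edges than vertices.

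For the recursion to close, the edge budget must survive the deletion. A short computation gives
\begin{align}
 f(v,d) - f(v-\ell,\,d-1) = 2v + (2d-3)\ell - 4d + 3,
\end{align}
so I may afford to destroy up to this many edges when removing $C$. Choosing $C$ to be a shortest cycle guarantees it is chordless, whence the number of edges destroyed is exactly $\sum_{u \in V(C)} \deg_G(u) - \ell$; the deletion is therefore admissible precisely when the cycle vertices carry total degree at most $2v + (2d-2)\ell - 4d + 3$, i.e. average degree on $C$ below roughly $2v/\ell$. Since $G$ has fewer than $2d-1$ near-universal vertices, most vertices sit well below degree $v-1$, and I would select the short cycle so as to avoid the few high-degree vertices, keeping the degree sum on $C$ within budget.

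The hard part, and where the technical weight of \cite{erdHos1962maximal} lies, is twofold. Making ``select a cheap short cycle'' rigorous --- simultaneously controlling the girth and the total degree on the chosen cycle while dodging the at most $2d-2$ high-degree vertices --- is the first difficulty, and is exactly why the threshold $v \ge 24d$ appears. The second is the interaction between the two alternatives across the induction: applied to $G'$ the hypothesis may return not $d-1$ disjoint cycles but $2d-3$ vertices of degree $v'-1$, and I must then convert that configuration back into the required cycles. Here I would use that each such near-universal vertex, together with any edge among the remaining vertices, spans a triangle, and that with $v$ large there are enough pairwise-disjoint such edges to realize the $2d-3$ universal vertices as disjoint triangles; combining these with $C$ yields the $d$ disjoint cycles. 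Balancing these two accountings against the exact constant $f(v,d)$ is the crux of the argument.
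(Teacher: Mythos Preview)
The paper does not prove this lemma: it is introduced as ``originally proved as Theorem 1 in \cite{erdHos1962maximal}'' and then used without any argument. There is therefore nothing in the paper to compare your reconstruction against; the authors simply invoke the result from the literature.

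Your sketch of the Erd\H{o}s--P\'osa argument is structurally sound --- the extremal example is correct, the edge-budget identity $f(v,d)-f(v-\ell,d-1)=2v+(2d-3)\ell-4d+3$ checks out, and induction on $d$ with a deleted cycle is indeed the architecture of the original proof. You also correctly identify where the real work lies: (i) producing a cycle cheap enough to delete while respecting the edge budget and steering around the at most $2d-2$ high-degree vertices, and (ii) handling the branch where the induction hypothesis on $G'$ returns $2d-3$ universal vertices rather than $d-1$ cycles. Since you explicitly leave both of these open and defer to \cite{erdHos1962maximal} for the details, what you have is an accurate outline rather than a proof; the two gaps you name are precisely the substantive content of the original paper and are not routine. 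In particular, your triangle-packing sketch for case (ii) needs care: if $G'$ is exactly the extremal configuration on $v'$ vertices (the $2d-3$ universal vertices joined to an independent set), then there are \emph{no} edges among the non-universal vertices and hence no triangles of the form you describe, so the argument must exploit that $G'$ sits strictly above the threshold after deletion, which in turn feeds back into how carefully $C$ was chosen.
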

%\textbf{
Note that the dependency graph is a directed graph and in order to use 
Lemma \ref{lemerdos}, we need to construct 
an undirected graph.    
Let's denote the directed and undirected random graphs on $n$ vertices and edge presence probability $p_{\epsilon}$ by 
$\overrightarrow{G}(n,p_{\epsilon})$ and 
${G}(n,p_{\epsilon})$, respectively.
 In a directed graph $\overrightarrow{G}(n,p_{\epsilon})$, the  
probability that two vertices are connected by two opposite directed edges  is $p_{\epsilon}^2$. Therefore, we can build an undirected
graph ${G}(n,p_{\epsilon}^2)$ 
with the same number of vertices and an edge between two vertices if there are two opposite directed edges in the directed graph
$\overrightarrow{G}(n,p_{\epsilon})$ between these two UTs. Hence, $\overrightarrow{G}(n,p_{\epsilon})$
essentially contains a copy of $G(n,p_{\epsilon}^2)$. Note that there are some edges between UTs in  $\overrightarrow{G}(n,p_{\epsilon})$ that 
do not appear
in $G(n,p_{\epsilon}^2).$ This fact was also observed in \cite{haviv2012linear}.
%
%
%
%and the undirected random graph on $n$ vertices and edge presence probability $p_{\epsilon}$  by $G(n,p_{\epsilon})$.
%In \cite{haviv2012linear}, the problem of index coding over random graphs is studied and in their paper they explain the 
%relationship between directed and undirected random graphs. In the next remark we will elaborate on this relationship
%for applications useful in our work. 
%}
%\begin{rem}
% \label{haviv2012linear-rem}
% \textbf{
% In the directed graph $\overrightarrow{G}(n,p_{\epsilon})$ 
%probability that two vertices are connected by two oppositely directed edges is $p_{\epsilon}^2$ . Hence, $\overrightarrow{G}(n,p_{\epsilon})$
%essentially contains a copy of $G(n,p_{\epsilon}^2)$. 
%\iffalse On the other hand, the probability that two vertices in
%$\overrightarrow{G}(n,p)$ are not connected at all is $(1-p)^2$, so $\overrightarrow{G}(n,p)$
% is essentially contained in $G(n,1-(1-p)^2)$. 
% \fi
Therefore, a lower bound on the number of disjoint cycles for  $G(n,p_{\epsilon}^2)$ 
 implies a lower bound on the number of disjoint cycles for $\overrightarrow{G}(n,p_{\epsilon})$.
%}
%\end{rem}

%Since each UT is requesting a content and has a subset of contents in its cache, using index coding for 
%ICN architecture will provide some capacity gains because there are strong correlation between requested contents
%and the cached ones.
%seems to provide some capacity improvements. The fact that with high probability, there are strong correlation between
%cached contents and
%new content requests because of Zipfian distribution of contents, it is clear  that index coding will provide capacity improvement. 
%This is in fact a special problem of index coding because of the 
%special correlation between requested contents and contents in the caches. 

In the following theorems, we will use Lemma \ref{lemerdos} to prove that using
index coding to code the contents 
can be very efficient. 
\begin{thm}
 \label{indexlimit}
 {\em 
 Assume all UTs are utilizing LRU or LFU caching policies for a Zipfian content
 request distribution with parameter $s>1$.
 Index coding can save 
 $\Omega(n p_{\epsilon}^2)$ transmissions for any helper serving $n$
 UTs with a probability of at least $1-\epsilon$ for any $0 < \epsilon < 1$.
}\end{thm}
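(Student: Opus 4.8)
The plan is to bound the number of saved transmissions from below by the number of vertex-disjoint cycles in the dependency graph — a legitimate lower bound by the cycle-counting argument of \cite{chaudhry2011complementary,neely2013dynamic} — and then to manufacture $\Omega(n p_{\epsilon}^2)$ such cycles using Lemma \ref{lemerdos}. First I would pass from the directed dependency graph $\overrightarrow{G}(n,p_{\epsilon})$ to the undirected graph $G(n,p_{\epsilon}^2)$ that keeps an edge between $N_i$ and $N_j$ exactly when both opposite directed edges are present. As already observed, $\overrightarrow{G}(n,p_{\epsilon})$ contains a copy of $G(n,p_{\epsilon}^2)$, so any family of vertex-disjoint cycles in $G(n,p_{\epsilon}^2)$ is also a family of vertex-disjoint cycles in the dependency graph. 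The edge probability $p_{\epsilon}^2$ is justified through Lemma \ref{hproblem}: on the event — of probability at least $1-\epsilon$ — that the relevant requests have popularity index at most $h_{\epsilon}$, the lower bound \eqref{validating_eq} gives $\textrm{Pr}[r_i\in C_j]\ge h_{\epsilon}^{-s}/H_{m,s}=p_{\epsilon}$ for each direction, so a bidirectional edge is present with probability at least $p_{\epsilon}^2$ as in \eqref{edgeproblowerbound}. This is the origin of the overall $1-\epsilon$ confidence in the statement.

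Next I would count edges and apply Lemma \ref{lemerdos} with $v=n$. The expected number of edges of $G(n,p_{\epsilon}^2)$ is $\binom{n}{2}p_{\epsilon}^2=\Theta(n^2 p_{\epsilon}^2)$, and since the edge indicators overlap only through shared endpoints, a second-moment (Chebyshev) estimate shows the actual number $|E|$ is at least a constant fraction of this with high probability. I then take $d$ as large as possible subject to $f(n,d)=(2d-1)n-2d^2+d\le |E|$; because $f(n,d)\approx 2dn$ when $d\ll n$ while $|E|=\Theta(n^2 p_{\epsilon}^2)$, this is satisfied by some $d=\Theta(n p_{\epsilon}^2)$. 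Since $p_{\epsilon}^2<1$, the hidden constant can be chosen small enough that the hypotheses $d>1$ and $v=n\ge 24d$ of Lemma \ref{lemerdos} also hold for all large $n$. The lemma then returns either $d$ vertex-disjoint cycles or $2d-1$ vertices of degree $n-1$.

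Finally I would dispose of the second alternative. In $G(n,p_{\epsilon}^2)$ the probability that even one vertex has degree $n-1$ is $(p_{\epsilon}^2)^{n-1}$, which is negligible, so with overwhelming probability the cycle alternative holds; but even if it fails, any $2d-1$ vertices of degree $n-1$ are pairwise adjacent and hence form a clique $K_{2d-1}$, which splits into $\lfloor (2d-1)/3\rfloor=\Omega(d)$ vertex-disjoint triangles. Either way we obtain $\Omega(d)=\Omega(n p_{\epsilon}^2)$ vertex-disjoint cycles, hence at least $\Omega(n p_{\epsilon}^2)$ saved transmissions, on the event of probability at least $1-\epsilon$ furnished by Lemma \ref{hproblem}. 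I expect the main obstacle to lie in the bookkeeping of the middle paragraph: verifying that a single $d=\Theta(n p_{\epsilon}^2)$ can simultaneously meet $f(n,d)\le |E|$ and $n\ge 24d$, and — more delicately — reconciling the per-request $1-\epsilon$ guarantee of Lemma \ref{hproblem} with the graph-wide edge count so that the stated confidence is genuinely $1-\epsilon$ rather than eroded by the concentration failure probability or by requiring many requests to be popular at once.
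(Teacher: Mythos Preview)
Your proposal is correct and follows essentially the same route as the paper: lower-bound edge probabilities via Lemma~\ref{hproblem} and \eqref{validating_eq}, pass from $\overrightarrow{G}(n,p_\epsilon)$ to the undirected $G(n,p_\epsilon^2)$, and feed the edge count into Lemma~\ref{lemerdos} with $d=\Theta(np_\epsilon^2)$ (the paper takes the concrete choice $d^\star=\lfloor np_\epsilon^2/24\rfloor$). Two minor differences: the paper skips your Chebyshev step and simply works with the \emph{expected} edge count, and in the degree-$(n-1)$ alternative it does not split the clique into triangles but instead invokes directly that a clique of size $2d^\star-1$ can be served with one transmission, saving $2d^\star-2$; your caution about whether the per-request $1-\epsilon$ guarantee really propagates to the graph-wide statement is well placed, as the paper does not address it.
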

\begin{proof}
 Consider a Zipfian distribution with parameter $s>1$ and
let $0<\epsilon<1$ be fixed.
The dependency graph $\overrightarrow{G}(V,E)$ in our problem
is composed of $n$ vertices $N_{1}, N_{2}, \dots, N_{n}$ which correspond to the $n$ UTs that are served by a helper. 
Note that the existence of an edge in dependency graph depends on the probability that a UT is requesting a content
and another UT has already cached that content\footnote{This edge has no relationship with the actual physical 
link between two UTs.}. Therefore,
this is a non-deterministic graph with some probability 
for the existence of each edge between the two vertices. 
%Since we are dealing with an ICN, this graph is a non-deterministic graph. 
In this non-deterministic
dependency graph, the probability of existence of edge $(N_{i}, N_{j})$ in $E$ 
is equal to the probability that content $r_i$ requested by $N_i$, is already cached in $N_j$. 
%These two probabilities are independent and therefore,  
%In other words, the probability that the vertex $N_{r_i}$ 
%is connected to the vertex $N_{r_j}$ is equal to the probability that UT $N_i$ has requested content
%with index $r_i$ at some time before. 
 Therefore, with  LRU or LFU caching policy assumption and using equations
 \eqref{validating_eq} and \eqref{validating_eq2}, we arrive at 
 %the probability that in the dependency graph $\overrightarrow{G}$, there 
%exists a directed edge $(N_{i},N_{j})$ in $E$ is at least
\begin{equation}
 \label{edgeprob}
 \textrm{Pr}[(N_{i},N_{j}) \in E] = \textrm{Pr}[r_i \in C_j]  \ge \frac{r_i^{-s}}{H_{m,s}}.% = 1- \left(1 - \frac{r_j^{-s}}{H_{n,s}} \right)^{q} 
\end{equation}
 %This means that in graph $G$ the probability of existence of any edge is solely dependent on the destination vertex 
%index.
 Using Lemma \ref{hproblem} for any $0 < \epsilon <1$, if $h_{\epsilon}$ is 
chosen as $h_{\epsilon} = {\epsilon}^{\frac{1}{1-s}}$, then with a probability of at least 
$1-\epsilon$, any requested content has a popularity index $r_i$ 
less than $h_{\epsilon}$.
This means that with a 
probability { of at least $1-\epsilon$}, 
the edge presence probability in equation \eqref{edgeprob} can be lower 
bounded by $p_{\epsilon}$.
Therefore,  with a probability of at least $1-\epsilon$,
maximum number of vertex-disjoint cycles in our directed dependency graph
$\overrightarrow{G}(V,E)$ can be lower bounded by the maximum number of 
vertex-disjoint cycles in an Erd\H{o}s-R\'eyni random graph $\overrightarrow{G}(n, p_{\epsilon})$ with $n$ vertices and edge 
presence probability $p_{\epsilon}$. Now we can use 
Lemma \ref{lemerdos} and undirected graph ${G}(n, p_{\epsilon}^2)$ 
%\textbf{and remark \ref{haviv2012linear-rem}}
%the results in \cite{erdHos1962maximal} 
to find a lower bound on the 
number of vertex disjoint cycles in $\overrightarrow{G}(n, p_{\epsilon})$. This in turn, will give us a lower bound on the number of 
vertex-disjoint cycles in $\overrightarrow{G}(V,E)$.
%We will state Theorem 1 in \cite{erdHos1962maximal} as the following lemma,  
%Using this lemma we can find a probabilistic lower bound on the number of disjoint cycles in $G(m,p_{\epsilon})$. 

Note that $G(n,p_{\epsilon}^2)$ is an undirected Erd\H{o}s-R\'eyni random graph on $n$ vertices and 
edge presence probability $p_{\epsilon}^2$. This graph has
 a maximum of $n(n-1)$ undirected 
edges. However, since every undirected
edge 
in this graph exists with a probability of $p_{\epsilon}^2$, the expected value of the number of edges in graph 
$G(n,p_{\epsilon}^2)$
is $n(n-1)p_{\epsilon}^2$. 
This means that if $d$ in Lemma \ref{lemerdos} with $v=n$ is chosen to be an integer such that 
\begin{equation}
 \label{chooces}
 n(n-1)p_{\epsilon}^2 \ge (2d-1)n - 2d^2+d,
\end{equation}
then on average, $G(n,p_{\epsilon}^2)$ will have
{either}
$d$ disjoint cycles
{or $2d-1$ vertices of degree $n-1$}. For the purpose of our paper we can easily
verify that for large enough values of $n$, 
$d^{\star} = \lfloor \frac{np_{\epsilon}^2}{24} \rfloor$ satisfies equation \eqref{chooces} (Notice that the condition 
$24 d^{\star} \le n$ in Lemma \ref{lemerdos} is also met). Therefore {based on Lemma \ref{lemerdos},
the graph 
$G(n,p_{\epsilon}^2)$ either has at least $d^{\star} = \lfloor \frac{np_{\epsilon}^2}{24} \rfloor$ disjoint cycles
or
$2d^{\star}-1=2\lfloor \frac{np_{\epsilon}^2}{24} \rfloor-1$ vertices 
with degree $n-1$. As mentioned before, $\overrightarrow{G}(n, p_{\epsilon})$ essentially contains a 
copy of $G(n,p_{\epsilon}^2)$. Consequently, $\overrightarrow{G}(n, p_{\epsilon})$ either has at least
$d^{\star} = \lfloor \frac{np_{\epsilon}^2}{24} \rfloor$ disjoint cycles
or 
$2d^{\star}-1=2\lfloor \frac{np_{\epsilon}^2}{24} \rfloor-1$ vertices 
with degree $n-1$. The number of vertices in graph $\overrightarrow{G}(n, p_{\epsilon})$ is $n$. Therefore, 
the latter case gives rise to a situation where there are $2d^{\star}-1=2\lfloor \frac{np_{\epsilon}^2}{24} \rfloor-1$ 
vertices which are connected to any 
other vertex in $\overrightarrow{G}(n, p_{\epsilon})$ through undirected edges.
This condition {results in} having a clique of size 
$2\lfloor \frac{np_{\epsilon}^2}{24} \rfloor-1$ in $\overrightarrow{G}(n, p_{\epsilon})$.
%In this case, $\overrightarrow{G}(n, p_{\epsilon})$ contains an undirected 
%subgraph of size $2d^{\star}-1=np_{\epsilon}^2-1$ in which every vertex is connected to every other vertex in the subgraph and also 
%every other vertex in $\overrightarrow{G}(n, p_{\epsilon})$ through undirected edges. Therefore, 
%$\overrightarrow{G}(n, p_{\epsilon})$ contains at least a clique of size $2d^{\star}-1=np_{\epsilon}^2-1$ in this case.
}

{In summary, $\overrightarrow{G}(n, p_{\epsilon})$ has either $d^{\star}$ disjoint cycles or it contains 
a clique of size  $2d^{\star}-1$. Hence,
with a probability { of at least $1-\epsilon$}, the dependency graph
$\overrightarrow{G}(V,E)$ on average has either  
$d^{\star}$ disjoint cycles or it contains a clique of size  $2d^{\star}-1$. In either of these cases 
 $d^{\star}=\lfloor \frac{np_{\epsilon}^2}{24} \rfloor$ transmissions can be saved using index coding. 
 This proves the theorem.
}
\end{proof}
\begin{thm}
 \label{tightness}
{\em Index coding through cycle counting and clique partitioning 
 can save $\Theta(n)$ transmissions in a network with $n$ UTs and Zipfian content 
 request distribution with parameter $s>1$.}
\end{thm}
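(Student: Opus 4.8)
The plan is to prove the claim by establishing two matching order bounds on the number of saved transmissions. The lower bound is essentially already in hand: Theorem~\ref{indexlimit} guarantees that, with probability at least $1-\epsilon$, cycle counting together with clique partitioning save $d^{\star}=\lfloor np_{\epsilon}^2/24\rfloor=\Omega(np_{\epsilon}^2)$ transmissions. The only thing left to check is that this quantity is genuinely $\Omega(n)$ rather than something that degrades with $m$. To this end I would fix $\epsilon$ and $s$ to constants (any convenient choice, e.g. $\epsilon=\tfrac12$, so that the success probability is also a positive constant). Then $h_{\epsilon}=\epsilon^{1/(1-s)}$ is a fixed constant, and since $s>1$ the generalized harmonic number converges, $H_{m,s}\to\zeta(s)$, so that
\begin{equation*}
 p_{\epsilon}=\frac{h_{\epsilon}^{-s}}{H_{m,s}}=\frac{\epsilon^{s/(s-1)}}{H_{m,s}}\longrightarrow\frac{\epsilon^{s/(s-1)}}{\zeta(s)}>0
\end{equation*}
as $m\to\infty$, independently of $m$. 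Hence $p_{\epsilon}^2$ is bounded below by a positive constant and $d^{\star}=\Omega(np_{\epsilon}^2)=\Omega(n)$.

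For the matching upper bound I would argue that no index code, heuristic or optimal, can save more than a linear number of transmissions. Each of the $n$ UTs issues a single request, so the uncoded delivery broadcasts at most $n$ distinct contents; consequently the length of any index code is at most $n$, and since the coded length is at least one, the number of saved transmissions is at most $n-1=\BigO{n}$. Equivalently, in the dependency graph on $n$ vertices the savings extracted from vertex-disjoint cycles and from clique partitioning are each bounded by the number of vertices, namely $n$. Combining this with the lower bound shows that cycle counting and clique partitioning save both $\Omega(n)$ and $\BigO{n}$ transmissions, hence $\Theta(n)$, which is exactly the statement of Theorem~\ref{tightness}.

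The main subtlety—rather than any heavy computation—lies in the lower bound half: one must make precise that the constant $p_{\epsilon}^2$ does not vanish as $m\to\infty$, which is where the convergence $H_{m,s}\to\zeta(s)$ for $s>1$ (a \emph{non-heavy-tailed} distribution, cf. Remark~\ref{rem_prob_explain}) is essential; for $s\le 1$ the harmonic sum diverges, $p_{\epsilon}\to 0$, and the argument would break. A secondary point is to reconcile the high-probability ($1-\epsilon$) guarantee of Theorem~\ref{indexlimit} with the clean $\Theta(n)$ statement here, which I would handle simply by fixing $\epsilon$ to a constant so that both the gain $p_{\epsilon}^2$ and the confidence level are constants that fold harmlessly into the order notation.
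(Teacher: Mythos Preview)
Your proposal is correct and follows essentially the same approach as the paper: invoke Theorem~\ref{indexlimit} for the $\Omega(n)$ lower bound after observing that $p_{\epsilon}$ is a positive constant when $\epsilon$ and $s>1$ are fixed, and then cap the savings by the trivial $O(n)$ upper bound. The only cosmetic difference is that the paper separates the two outcomes of Lemma~\ref{lemerdos} (disjoint cycles versus a large clique) and supplies a distinct upper bound for each heuristic ($n/2$ vertex-disjoint cycles, $n-1$ savings from a complete graph), whereas you give a single unified upper bound of $n-1$ valid for any index code; both arguments are sound.
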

\begin{proof}
 \label{psdfbg}
 In Theorem \ref{indexlimit}, we proved that in a network with Zipfian 
 content request distribution and for a fixed $0< \epsilon <1$, 
 with a probability of at least $1-\epsilon$, the index coding dependency graph 
 either has $d^{\star}=\lfloor \frac{np_{\epsilon}^2}{24} \rfloor$ disjoint cycles 
 or it contains a clique of size $2d^{\star}-1=
 2\lfloor \frac{np_{\epsilon}^2}{24} \rfloor-1$. 
 Consider the following situations, 
 \begin{enumerate}
  \item The dependency graph has 
  $d^{\star}=\lfloor \frac{np_{\epsilon}^2}{24} \rfloor$ disjoint cycles.
  In this case, for a fixed $\epsilon$, $p_{\epsilon}$ is a constant which 
  does not depend 
  on $n$. Hence, cycle counting can result in at least 
  $d^{\star}=\lfloor \frac{np_{\epsilon}^2}{24} \rfloor= \Omega(n)$ transmission savings. 
  This is a lower bound on the number of transmission savings. 
  
  On the other hand,  notice that the maximum number of vertex-disjoint
  cycles in any graph 
  with $n$ vertices cannot be greater than $\frac{n}{2}$ as shown in 
  Figure \ref{fig_ex_proof1}.  Therefore, the maximum number of
  transmission savings using cycle counting  is 
  $\frac{n}{2} = O(n)$. This is an upper bound on the number of 
  saved transmissions. Since the order of upper and lower bounds are the same, it can be concluded that the number of saved 
  transmissions scales as $\Theta(n)$ .

%  Hence, in this case the number of saved transmissions 
%  through cycle counting is upper bounded as $O(n)$ and lower bounded as 
%  $\Omega (n)$ which proves that the number of saved 
%  transmissions scales as $\Theta(n)$ .
  \item The dependency graph contains a clique of size $\mathrm{k}^{\star}=2d^{\star}-1=
 2\lfloor \frac{np_{\epsilon}^2}{24} \rfloor-1$. Through clique partitioning
 we will  be able to save at least $\mathrm{k}^{\star}-1$
 transmissions by  sending only one transmission.
 Hence, through clique 
 partitioning, we will be able to save at least $\mathrm{k}^{\star}-1 = 
% 2d^{\star}-2= 2\lfloor \frac{np_{\epsilon}^2}{24} \rfloor-2 = 
 \Omega(\frac{np_{\epsilon}^2}{12}) =\Omega(n)$ transmissions by sending 
 only one transmission to the UTs forming that specific clique. This is 
 a lower bound on the number of saved transmissions. 
 
 On the other hand, if the dependency graph is
 a perfectly complete graph on $n$ nodes which means that 
 every requested content is available in all other UTs' caches,
 then all the requested transmissions can be satisfied by one transmission
 which is a linear XOR combination
 of all requested contents. 
% In this case all the content
% requests can be satisfied by only one transmission. 
 Hence, the number of transmission savings is equal to 
 $n-1 =O(n)$. Notice that this is the maximum number of transmission 
 savings since we at least need 1 transmission to satisfy all content 
 requests. This means that the number of transmission savings 
 is upper bounded by $\BigO{n}$. Since the upper and lower order bounds are the same, we conclude that the transmission saving scales as $\Theta(n)$.
%we have already proved that the 
% number of transmission savings is lower bounded by $\Omega(n)$, it 
% follows that the number of transmission savings in this case will also 
% scale as $\Theta(n)$.
 \end{enumerate}
  \end{proof}
  \begin{figure}
\centering
 \begin{tikzpicture}[->,>=stealth',shorten >=1pt,auto,node distance=0.9cm, semithick]
  \node[circle,draw]  (N0) {};
  \node[circle,draw]  (N1) [right of=N0] {};% {\footnotesize{$N_1$}};
  \node[circle,draw]  (N2) [below of=N0] {};% {\footnotesize{$N_1$}};
  \node[circle,draw]  (N3) [right of=N2] {};
  \node[circle,draw]  (N4) [right of=N3] {};
  \node[circle,draw]  (Nl) [right of=N4] {};
  \node[circle,draw]  (N5) [above of=Nl] {};
  \node[circle,draw]  (N6) [above of=N4] {};
  \path (N0)  [out=-50, in=50] edge node {} (N2);
  \path (N2)  [out=130, in=-130] edge node {} (N0);
  \path (N1)  [out=-50, in=50] edge node {} (N3);
  \path (N3)  [out=130, in=-130] edge node {} (N1);
  \path (N5)  [out=-50, in=50] edge node {} (Nl);
  \path (Nl)  [out=130, in=-130] edge node {} (N5);
  \path (N6)  [out=-50, in=50] edge node {} (N4);
  \path (N4)  [out=130, in=-130] edge node {} (N6);
\end{tikzpicture}
  \caption{Example of a dependency graph on $n=8$ nodes with maximum possible 
number of vertex disjoint cycles $n/2=4$.}
  %and dotted arrows show the high bandwidth communication between the helper and its neighboring UTs.}
  \label{fig_ex_proof1}
\end{figure}
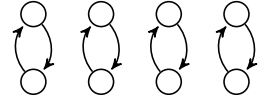

We can further prove that many properties of the dependency graph are independent of the total number of contents 
and only depends on the most popular contents 
in the network. As an example of these properties, we can consider the problem of finding a clique of size 
$k$ in the dependency graph. A clique of size $k$ in the dependency graph has an interesting interpretation since all the requests in this clique can be satisfied with one multicast transmission. 
%such a clique means that there exist a set of $k$ UTs
%$N_b = \{N_{b_1}, N_{b_2},..., N_{b_k} \}$ such that for every $1 \le i \le k$ and for 
%every $1 \le j \le k$, when $j \neq i$ we have $r_{b_{i}} \in C_{b_{j}}$. This  means that the simple XOR-based linear index 
%code $\sum_{i=1}^{k} M_{b_i}$ can be used by the helper to send the content $M_{b_i}$ to UT $N_{b_i}$ 
%for every $1 \le i \le k$ in just one multicast transmission. Each UT will then be able to decode the requested content using
%its cached contents. 
The following theorem proves that the probability of existence of a clique of size $k$ is 
lower bounded by a value which is 
independent of the 
total number of contents in the network, $m$, and only depends on the popularity index $s$. 
\begin{thm}{\em
 If LRU { or LFU} caching  policy is used and 
 the content request probability is Zipfian distribution, 
% \iffalse and the UTs request independent contents in 
% different time slots, 
% and the caching policy for network UTs is such that the cache content $c_{i,j} \in C_i$ 
% is the 
% content stored by UT $i$  in its $j^{th}$ location of the cache, 
% \fi 
 then the probability of finding a set of 
 $k$ UTs $N_b = \{N_{b_1}, N_{b_2},..., N_{b_k} \} \subseteq \mathbb{N}$ for which a
 single linear 
 index code (XOR operation) can be used to transmit the requested content $r_{b_i}$ to $N_{b_i}$ for $1 \le  i \le k$ 
 can be lower bounded by a value that {with a probability close to one} is
 independent of the total number of contents in the network. 
 }\label{thm_clique}
\end{thm}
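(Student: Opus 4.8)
The plan is to reduce the stated probability to the chance that one fixed set of $k$ UTs forms a clique, and then to bound that chance below by a quantity that, thanks to Lemma \ref{hproblem} and the assumption $s>1$, does not depend on $m$. First I would make the combinatorial translation precise. By the decoding rule for an XOR code, the set $N_b=\{N_{b_1},\dots,N_{b_k}\}$ can be served by the single transmission $M_{r_{b_1}}\oplus\cdots\oplus M_{r_{b_k}}$ exactly when every member already caches the content requested by every other member, i.e.\ $r_{b_j}\in C_{b_i}$ for all $j\neq i$. By Definition \ref{depgraph} this says precisely that $N_b$ induces a bidirectional clique in $\overrightarrow{G}(V,E)$ (equivalently, an independent set in the conflict graph), so the quantity to bound below is the probability that $\overrightarrow{G}(V,E)$ contains a $k$-clique.

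Next I would import the machinery from the proof of Theorem \ref{indexlimit}. Fixing any $0<\epsilon<1$ and $h_{\epsilon}=\epsilon^{1/(1-s)}$, Lemma \ref{hproblem} guarantees that with probability at least $1-\epsilon$ every request has popularity index at most $h_{\epsilon}$, and on this event each directed edge is present with probability at least $p_{\epsilon}=h_{\epsilon}^{-s}/H_{m,s}$ by \eqref{edgeprob}. Since the true edges are only more likely, this lower-bounds $\overrightarrow{G}(V,E)$ by the Erd\H{o}s--R\'enyi graph $\overrightarrow{G}(n,p_{\epsilon})$ and the undirected copy $G(n,p_{\epsilon}^{2})$ it contains. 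Rather than estimate the existence probability exactly, I would bound it below by the probability that one arbitrary fixed $k$-subset is a clique: treating the edges as independent, each of the $\binom{k}{2}$ pairs is bidirectionally present with probability at least $p_{\epsilon}^{2}$, which yields
\begin{equation}
\textrm{Pr}[\,\overrightarrow{G}(V,E)\text{ contains a }k\text{-clique}\,]\ \ge\ (1-\epsilon)\,p_{\epsilon}^{\,k(k-1)}.
\end{equation}

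The final step is the $m$-independence. Because $s>1$, the generalized harmonic number $H_{m,s}$ converges to $\zeta(s)$, so for large $m$ one has $p_{\epsilon}\approx \epsilon^{-s/(1-s)}/\zeta(s)$ and the right-hand side above depends only on $\epsilon$, $s$ and $k$, not on the total number of contents $m$. Taking $\epsilon$ small drives the prefactor $1-\epsilon$ toward one, which is the sense in which the bound holds ``with a probability close to one'' while remaining independent of $m$. This is exactly the assertion of the theorem, and it is the role of the non-heavy-tailed hypothesis $s>1$ that makes the convergence of $H_{m,s}$ possible.

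The step I expect to be the main obstacle is controlling the joint caching event inside a single cache, which is what hides behind the ``independent edges'' used above. Conditioning on the (small-index) requests, the clique event factorizes across UTs because the caches $C_{b_1},\dots,C_{b_k}$ of distinct UTs are mutually independent; each factor is the probability that one UT $N_{b_i}$ simultaneously stores the $k-1$ prescribed popular contents $\{r_{b_j}:j\neq i\}$. This is the delicate quantity: the events ``content $a\in C_{b_i}$'' for different $a$ share the finite cache and need not be independent (under LRU the capacity constraint can even make them negatively associated), so one cannot simply multiply the single-content bounds \eqref{validating_eq} and \eqref{validating_eq2}. I would instead lower bound the joint event directly --- for LFU by noting that any content with index at most $h_{\epsilon}\le\delta$ is stored with certainty, and for LRU by requiring the $k-1$ contents to be precisely the most recently requested distinct contents of $N_{b_i}$, an event whose probability is an $m$-independent product of Zipf weights. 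Either way each factor is bounded below by a constant depending only on $\epsilon$, $s$, $\delta$ and $k$, so the displayed inequality remains valid up to an $m$-independent constant and the conclusion of the theorem is unaffected.
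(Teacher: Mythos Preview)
Your argument follows the same route as the paper's: translate the XOR-codability condition to a $k$-clique in the dependency graph, lower bound each directed-edge probability via Lemma~\ref{hproblem} and \eqref{validating_eq}--\eqref{validating_eq2}, and extract $m$-independence from $H_{m,s}\le\zeta(s)$. Two differences are worth noting. First, the paper does not stop at one fixed $k$-subset but sums the per-subset clique probabilities over all $\binom{n}{k}$ choices (using elementary symmetric polynomial notation), arriving at $P_k\ge\binom{n}{k}\bigl(h_\epsilon^{-ks}/\zeta(s)\bigr)^{k-1}$; this buys an extra $\binom{n}{k}$ factor, though strictly that sum is the expected number of $k$-cliques rather than the existence probability, so your single-subset lower bound is actually the cleaner step. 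Second, the paper simply writes the full product factorization $P_{b_1,\dots,b_k}=\prod_i\prod_{j\ne i}\Pr[r_{b_i}\in C_{b_j}]$ under a blanket independence assumption and does not address the intra-cache dependence you isolate in your final paragraph; your concern is legitimate, and your proposed LRU/LFU workarounds go beyond what the published proof supplies. In short, your proposal is essentially the paper's argument, slightly more careful on the probabilistic side and slightly less aggressive in the final constant.
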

\begin{proof}
%We are interested in finding the probability of having a $k$-Index Codable Block inside the UTs that are being
%served by router $R_1$. 
The probability that a specific set of UTs $\{N_{b_1}, N_{b_2},..., N_{b_k}\}$ form a 
clique of size $k$ is 
\begin{eqnarray}
\label{pb1b2bk_initial}
 P_{b_1,b_2,...,b_k} = \textrm{Pr}[r_{b_i} \in C_{b_j} ~\textrm{for}~ 1 \le \forall i,j \le k, j \neq i ~].% \textrm{and}~
 %r_{b_i} \notin W_i ~\textrm{for}~ 1 \le \forall i \le k]
\end{eqnarray}
Assuming that the UTs are requesting contents independently of each other, this probability can be simplified as 
\begin{eqnarray}
\label{pb1b2bk_indep}
 P_{b_1,b_2,...,b_k} = \prod_{i=1}^{k}\prod_{j=1,j \neq i}^{k}\textrm{Pr}[r_{b_i} \in C_{b_j}]. %\prod_{i=1}^{k}
 %\textrm{Pr}[r_{b_i} \notin W_i]
\end{eqnarray}
Using equations \eqref{validating_eq} and \eqref{validating_eq2},  we arrive at
\begin{eqnarray}
\label{pbicdjkghj_simple}
 \textrm{Pr}[r_{b_i} \in C_{b_j}] \ge \frac{{r_{b_i}}^{-s}}{H_{m,s}}.
\end{eqnarray}
Equation \eqref{pb1b2bk_indep} can  be lower bounded as 
\begin{eqnarray}
\label{pb1b2bk_indep2}
 P_{b_1,b_2,...,b_k} \ge \prod_{i=1}^{k}\left(\frac{{r_{b_i}}^{-s}}{H_{m,s}} \right)^{k-1}.
\end{eqnarray}
The probability to have a clique of size $k$ is computed by considering all $\binom{n}{k}$% ${n \choose k}$ 
groups of $k$ UTs. Hence, the probability of having a clique of size $k$ denoted by $P_k$ is given by
\begin{align}
\label{P_total1}
P_k &= \sum_{b_1,b_2,...,b_k \subseteq N}P_{b_1,b_2,...,b_k}  %\nonumber \\
  \ge \sum_{b_1,b_2,...,b_k \subseteq N} \prod_{i=1}^{k}\left( 
\frac{{r_{b_i}}^{-s}}{H_{m,s}} \right)^{k-1} \nonumber \\
&= \frac{\sum_{b_1,b_2,...,b_k \subseteq N} \prod_{i=1}^{k}r_{b_i}^{-s (k-1)}}{H_{m,s}^{k-1}}.
\end{align}
In order to simplify this expression, we use the \emph{elementary symmetric polynomial} notation. If we have 
a vector 
$V_n =( v_1,v_2,...,v_n)$ of 
length $n$, then the $k$-th degree elementary symmetric polynomial of these variables is denoted
as
\begin{eqnarray}
\label{elemnt}
\sigma_{k}({V_n})=\sigma_{k}(v_1,...,v_n) = \sum_{1\leq i_1<i_2<..<i_k\leq n}v_{i_1}...v_{i_k}. 
\end{eqnarray} 
Using this notation and by defining $ Y_n \triangleq
(r_{1}^{-s  (k-1)},r_{2}^{-s  (k-1)},...,r_{n}^{-s (k-1)})$,
we have $ P_k \ge  \frac{\sigma_{k}({Y_n})}{H_{m,s}^{k-1}}$.
Since the content request probability follows a Zipfian distribution,  we have 
$\textrm{Pr}[r_j \le h_{\epsilon}] = \frac{H_{h_{\epsilon},s}}{H_{m,s}}$.
Therefore, for a specific group of UTs 
$N_{b_1}, N_{b_2},..., N_{b_k}$, the probability that they all request
contents from the top $h_{\epsilon}$ most popular 
contents is given by   
\begin{eqnarray}
 \label{pr1r2}
 \textrm{Pr}[r_{b_1} \le h,...,r_{b_k} \le h_{\epsilon}]
 = \prod_{j=1}^{k} \textrm{Pr}[ r_{b_j} \le h_{\epsilon}] =
 \left(\frac{H_{h_{\epsilon},s}}{H_{m,s}} \right)^k.
\end{eqnarray}
We have already proved in Lemma \ref{hproblem}  that for large values of $m$ and
 $h_{\epsilon}={\epsilon}^{\frac{1}{1-s}}$, the 
ratio $\frac{H_{h_{\epsilon},s}}{H_{m,s}}$ is greater than $1 - \epsilon$.
Besides this, the fact that $n$ is most likely much 
larger than
$k$, means that with a very high probability, 
for each set of UTs $\{ N_{b_1}, N_{b_2},..., N_{b_k}\} $, the 
requests come only from the $h_{\epsilon}$ most popular contents. This 
implies that with a high probability,
$\sigma_{k}({Y_n}) \ge \binom{n}{k} h_{\epsilon}^{-ks(k-1)}$. % with respect to $n$. 
Also, notice that  %when $s>1$, 
$ H_{m,s} <  \zeta(s) < \infty$.
Therefore, with a probability close to one, $P_k$ can be lower bounded as 
\begin{equation}
 \label{P_lower3}
 P_k \ge \binom{n}{k} \left( \frac{h_{\epsilon}^{-ks}}{\zeta(s)} \right)^{k-1}. 
 \end{equation}
%with respect to $n$. 
This lower bound does not depend on $m$ and only depends on $n, \epsilon, 
s$ and $k$. 
\end{proof}
Theorem \ref{thm_clique} states that regardless of the number of contents in the 
network, there is always a constant lower bound for the probability of finding a clique of size $k$. 
The result hints the potential use of linear index coding in these networks. In the next section, 
we will prove that linear 
index coding can indeed be very useful and can be used to construct codes acheiving order optimal capacity gains. 
%Notice that this lower bound can be improved if we choose smaller values for $h$. However, for lower values of $h$,
%the right hand side in equation \eqref{pr1r2} may not be close enough to 1 (specially if $m$ is not much larger than 
%$k$) and therefore the above lower bound may 
%not be accurate. Therefore  

\begin{rem}{\em
 The above capacity improvement is found for a traditional single hop index coding scenario. 
 For our proposed multihop 
 setup, similar gains still hold. In our proposed setting, we consider  communications for a small number 
 of hops and therefore multihop communication can only affect the capacity gain by a constant factor and the 
 order bound results will not be affected.
} \label{rem_hop}
\end{rem}

\section{Heuristics acheiving order optimal capacity}
\label{graphcoloringsec}

Both optimal and approximate solutions \cite{bar2011index,langberg2011hardness} for the general index coding problem are  NP-hard problems. 
%Furthermore, \cite{langberg2011hardness} shows that even finding an approximate solution for the index coding problem is also NP-hard.
Some efficient heuristic algorithms for the index coding problem were proposed \cite{chaudhry2008efficient} which can provide near optimal 
solutions. In some of these heuristic algorithms, the authors reduce the index coding 
problem to the graph coloring problem. 

Notice that every clique in the dependency graph of a specific index coding problem, can be satisfied with only one
transmission which is a linear combination of all contents requested by the UTs corresponding to the clique. Therefore, solving
the clique 
partitioning problem, which is the problem of finding a clique cover of minimum size for a graph \cite{garey2002computers}, 
yields a simple linear  index coding solution. The minimum number of cliques required to cover a graph can be regarded as 
an upper bound on the minimum number of index codes required to satisfy the UTs. 
Index coding rate is defined as the minimum number of required index codes to satisfy all the UTs.
Since lower index coding rates translate into higher values of transmission 
savings (or index coding gains)\footnote{In a dependency graph of $n$ UTs with the index coding rate of $\chi$, the number of
saved transmissions, $n-\chi$, is called the index coding gain.}
 as discussed in \cite{chaudhry2011complementary}, 
the number of transmission savings found in the clique partitioning problem is in fact a lower bound on the total number of 
transmission savings found from the optimal index coding scheme (or the optimal index coding gain).

On the other hand, solving the clique partitioning 
problem for any graph $G(V,E)$ is equivalent to solving the graph coloring problem for the complement graph 
$\bar{G}(V,\bar{E})$ which is a graph on the same set of vertices $V$ but containing only the edges that are not present 
in $E$. This is true because every clique in the dependency graph, gives rise to an independent set
in the complement graph. Therefore, 
if we have a clique partitioning of size $\chi$ in the dependency graph, we have $\chi$ distinct independent sets in the
complement graph. 
In other words, the chromatic number of the complement graph is $\chi$. 

The above argument allows us to use the rich 
literature on the chromatic number of graphs to study the index coding problem. In fact, any graph coloring algorithm running over 
the conflict graph can be directly used to obtain an achievable index coding rate. If running such an algorithm over the conflict 
graph results in a coloring of size $\chi$, this coloring gives rise to a clique cover of size $\chi$ in the dependency graph and 
an index coding of rate $\chi$ with index coding gain of $n-\chi$ which is a lower bound for the total number of transmission 
savings using the optimal index code\footnote{Notice that since the optimal index coding rate is 
upper bounded by the size of the minimum clique cover (which is equal to the chromatic number of the conflict graph), the value of 
transmission savings that we can achieve using the optimal index code is  lower bounded by $n-\chi$.}. 
Therefore, considering the chromatic number of the conflict graph, we can find a lower bound on the asymptotic index coding
gain. To do so, we use the 
following theorem from \cite{bollobas1988chromatic},
\begin{thm}{\em
 For a fixed probability $p$, $0 <p<1$, almost every random graph $G(n,p)$ (a graph with $n$ UTs and the edge
 presence 
 probability of $p$) has chromatic number, 
 \begin{equation}
 \label{chromnil}
 \chi_{G(n,p)} = -\left(\frac{1}{2}+ o(1)\right) \log (1-p) \frac{n}{\log n}
 \end{equation}
}\label{thmgood}
\end{thm}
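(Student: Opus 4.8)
The plan is to establish matching lower and upper bounds on $\chi_{G(n,p)}$, both asymptotic to $\frac{n}{2\log_b n}$ with $b = 1/(1-p)$; since $\log_b n = -\log n/\log(1-p)$, this is precisely $-\left(\frac{1}{2}+o(1)\right)\log(1-p)\frac{n}{\log n}$. The lower bound is the easy direction. Any proper coloring partitions the $n$ vertices into independent sets, so $\chi_{G(n,p)} \ge n/\alpha$, where $\alpha$ is the independence number. I would control $\alpha$ by a first-moment argument: the expected number of independent sets of size $\ell$ is $\binom{n}{\ell}(1-p)^{\binom{\ell}{2}}$, whose exponent is $\ell\log b\,(\log_b n - \ell/2)$, so the expectation tends to $0$ as soon as $\ell \ge (2+\eta)\log_b n$. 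By Markov's inequality $\alpha \le (2+o(1))\log_b n$ almost surely, which already yields $\chi_{G(n,p)} \ge (1-o(1))\frac{n}{2\log_b n}$.

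The upper bound is the hard direction. The naive strategy of repeatedly deleting a single maximum independent set loses a factor of two, because after a few removals the remaining graph no longer obviously contains independent sets of the full size $2\log_b n$. Instead I would prove that independent sets of size close to $2\log_b n$ survive in \emph{every} sufficiently large vertex subset, and then peel them off greedily. Concretely, fix $m_0 = \lceil n/\log^2 n\rceil$ and $k = \lfloor 2\log_b n - c\log_b\log_b n\rfloor$ for a suitable constant $c$, and aim to show that almost surely every subset $S$ with $|S|\ge m_0$ contains an independent set of size $k$. Granting this, one repeatedly removes such a $k$-set while at least $m_0$ vertices remain, exhausting all but at most $m_0$ vertices in at most $n/(k-1)$ color classes; the leftover $m_0 = o(n/\log n)$ vertices then receive distinct colors, a negligible term of order $o(n/\log n)$. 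This gives $\chi_{G(n,p)} \le (1+o(1))\frac{n}{2\log_b n}$, matching the lower bound.

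The main obstacle is the ``every large subset contains a large independent set'' statement, since it must survive a union bound over the $\binom{n}{m_0}\le 2^n$ subsets of size $m_0$. An ordinary second-moment argument shows that a fixed subset contains an independent $k$-set with probability tending to one, but the failure probability it provides is only polynomially small and is hopeless against $2^n$ terms. The resolution is a super-exponential large-deviation bound from the Azuma–Hoeffding martingale inequality, applied to the vertex-exposure martingale of a Lipschitz functional — for instance the maximum number of pairwise vertex-disjoint independent $k$-sets, which changes by at most $1$ when the edges incident to a single vertex are altered. Combining a lower bound on that functional's expectation with Azuma's inequality produces a failure probability small enough that, multiplied by $\binom{n}{m_0}$, the union bound still tends to $0$. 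Verifying that the expectation is large enough relative to the martingale's variance budget, so that the resulting exponent genuinely dominates $\log\binom{n}{m_0}$, is the delicate quantitative step where the choices of $k$, $m_0$, and the auxiliary functional must be balanced; this is the technical heart of the theorem.
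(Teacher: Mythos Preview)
Your outline is correct and is essentially the original Bollob\'as argument: the easy lower bound via $\chi \ge n/\alpha$ together with a first-moment bound on the independence number, and the hard upper bound via the ``every large induced subgraph contains a near-maximal independent set'' lemma established through a vertex-exposure martingale and Azuma's inequality, followed by a greedy peeling. The choices $m_0 \approx n/\log^2 n$ and $k = 2\log_b n - O(\log_b\log_b n)$ are exactly the ones that make the union bound over $\binom{n}{m_0}$ subsets go through.

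However, the paper itself does not prove this statement at all: Theorem~\ref{thmgood} is quoted verbatim from \cite{bollobas1988chromatic} and used as a black box in the proof of Theorem~\ref{thmgain}. So your proposal goes well beyond what the paper does---the paper's ``proof'' is simply a citation. If your aim is to match the paper, a one-line reference to Bollob\'as suffices; if your aim is to actually establish the result, your sketch is on target, with the quantitative balance in the Azuma step (showing the expected number of disjoint independent $k$-sets is $\Omega(m_0/k)$ so that the exponent $\Theta(m_0^2/k^2)$ dominates $\log\binom{n}{m_0} = O(m_0\log\log n)$) being, as you note, the place where the details must be checked carefully.
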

%\begin{proof}
% The proof is in \cite{bollobas1988chromatic}.
%\end{proof}
{We will now use Theorem \ref{thmgood} and the designed undirected graph $G(n,p_{\epsilon}^2)$ to find the number of transmission 
savings using a graph coloring based heuristic in our network.}
\begin{thm}{\em 
Using a graph coloring algorithm, in a network with $n$ UTs almost surely gives us a linear index code with gain
\begin{equation}
\label{gorbec}
 l = \Theta \left(n + \left(\frac{1}{2}+ o(1)\right) \frac{n}{\log n} \log p_{\epsilon}^2 \right).
 \end{equation}
}\label{thmgain}
\end{thm}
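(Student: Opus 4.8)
The plan is to reduce the computation of the coloring gain entirely to an application of Theorem \ref{thmgood} on the conflict graph, and then subtract from $n$. First I would invoke the correspondence established in the discussion preceding the theorem: a proper coloring of the conflict graph of size $\chi$ is exactly a clique cover of size $\chi$ in the dependency graph, hence a linear (XOR) index code of rate $\chi$, whose gain is $n-\chi$. Consequently, determining the order of the gain $l$ is the same as determining the chromatic number of the conflict graph, and the entire proof hinges on identifying that graph as an Erd\H{o}s--R\'enyi random graph with a constant edge probability so that Theorem \ref{thmgood} applies.

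Next I would pin down the conflict graph precisely. Since the undirected graph embedded in the directed dependency graph is $G(n,p_{\epsilon}^2)$, and the conflict graph is by definition its complement, each edge is present in the conflict graph with probability $1-p_{\epsilon}^2$; that is, the conflict graph is the random graph $G(n,1-p_{\epsilon}^2)$. To legitimately invoke Theorem \ref{thmgood} I must verify its hypothesis that the edge probability is a \emph{fixed} constant strictly inside $(0,1)$. This is where I would appeal to the remark following \eqref{edgeproblowerbound}: for large $m$ we have $H_{m,s}\approx\zeta(s)$, so $p_{\epsilon}$ depends only on $\epsilon$ and $s$ and is independent of both $m$ and $n$. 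Therefore $1-p_{\epsilon}^2$ is a constant in $(0,1)$ and the theorem is applicable.

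I would then apply Theorem \ref{thmgood} with $p=1-p_{\epsilon}^2$, so that $1-p=p_{\epsilon}^2$, obtaining for almost every such random graph
\begin{equation}
\chi_{G(n,1-p_{\epsilon}^2)} = -\left(\frac{1}{2} + o(1)\right) \log\!\left(p_{\epsilon}^2\right) \frac{n}{\log n}.
\end{equation}
Substituting into $l = n-\chi$ yields
\begin{equation}
l = n + \left(\frac{1}{2} + o(1)\right) \frac{n}{\log n} \log p_{\epsilon}^2,
\end{equation}
which is exactly the claimed expression. Since $0<p_{\epsilon}^2<1$ forces $\log p_{\epsilon}^2<0$ and the correction term is $o(n)$, this quantity is $\Theta(n)$, in agreement with Theorem \ref{tightness}.

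The hard part will be the careful bookkeeping of the two distinct probabilistic statements in play. Theorem \ref{thmgood} is an ``almost surely'' statement as $n\to\infty$, whereas the guarantee that the edge probability is at least $p_{\epsilon}$ comes from the event of Lemma \ref{hproblem}, which holds only with probability at least $1-\epsilon$. I would therefore condition on that event (all requests drawn from the top $h_{\epsilon}$ contents), under which the true dependency graph dominates $G(n,p_{\epsilon}^2)$ edge-wise. A short monotonicity argument is needed here: a denser dependency graph has a sparser complement and hence a smaller chromatic number, so replacing the true edge probability by the lower bound $p_{\epsilon}^2$ can only \emph{decrease} $\chi$ and thus can only \emph{increase} the gain; consequently the displayed formula is an achievable order for the coloring heuristic. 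The remaining subtlety I would address explicitly is that, although the actual dependency graph is directed, coloring the complement of the embedded undirected copy $G(n,p_{\epsilon}^2)$ still produces a bona fide clique cover (each monochromatic class is a clique realizable by a single XOR), so the guarantee transfers faithfully to the directed setting $\overrightarrow{G}(n,p_{\epsilon})$.
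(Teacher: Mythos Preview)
Your proposal is correct and follows essentially the same route as the paper: identify the conflict graph with (a graph dominated by) $G(n,1-p_{\epsilon}^2)$, apply Theorem~\ref{thmgood} with $p=1-p_{\epsilon}^2$ to obtain $\chi=-\bigl(\tfrac{1}{2}+o(1)\bigr)\tfrac{n}{\log n}\log p_{\epsilon}^2$, and conclude that the gain $n-\chi$ is $\Theta(n)$. Your treatment of the monotonicity step (true dependency graph dominates $G(n,p_{\epsilon}^2)$, hence its complement is sparser, hence smaller chromatic number) and of the conditioning on the event of Lemma~\ref{hproblem} is in fact more explicit than the paper's, but the underlying argument is the same.
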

\begin{proof}
Assume that a helper is serving $n$ UTs where $n$ is a large number. As 
discussed in Theorem \ref{thmgood}, the index coding gain
is lower bounded by $n-\chi$ where $\chi$ is the chromatic number of the conflict graph. However, notice that on average 
the chromatic
number of our non-deterministic conflict graph is upper bounded by the chromatic number of an undirected
random graph with edge existence 
probability of $1-p_{\epsilon}^2$. 
{To prove this, notice that in the 
dependency graph, the probability of edge existence between two vertices is at 
least $p_{\epsilon}$ which implies that the number of edges in the dependency
graph is on average greater than or equal to the number 
of edges in a directed Erdos-Reyni random graph
$\overrightarrow{G}(n,p_{\epsilon})$. However,  we know 
that the number of edges in $\overrightarrow{G}(n,p_{\epsilon})$ is at least equal
to the number of edges in an undirected 
Erdos-Reyni random graph $G(n,p_{\epsilon}^2)$. Therefore, the conflict
graph which is the complement of dependency graph, on average has less edges 
compared to a random graph with edge
existence probability of $1-p_{\epsilon}^2$ and consequently, its chromatic
number cannot be greater than the chromatic number of $G(n,1-p_{\epsilon}^2)$.
}
Given these facts, the index coding gain is lower bounded by 
$n-\chi_{G(n,1-p_{\epsilon}^2)}$. 
%to find a lower
%bound on the index coding rate, we can count the maximum number of vertex-disjoint cycles in the dependency graph. As argued 
%before the number of vertex-disjoint cycles in the dependency graph is more than the number of vertex-disjoint cycles in a 
%random graph $G_{m,p_{\epsilon}}$ since the edge presence probability in the dependency graph is more than $p_{\epsilon}$. For the complement
%graph the edge presence probability is at most $1-p_{\epsilon}$. Therefore, the chromatic number of the conflict graph can be lower 
%bounded by the chromatic number of the random graph $G_{m,1-p_{\epsilon}}$. 
Since $1-p_{\epsilon}^2$ is fixed, Theorem \ref{thmgood} implies that the chromatic number of the conflict graph is equal to 
\begin{equation}
\label{gorbec_new}
 \chi_{G(n,1-p_{\epsilon}^2)} =   -\left(\frac{1}{2}+ o(1)\right) \log p_{\epsilon}^2 ~ \frac{n}{\log n}
 \end{equation}
 This proves that the index coding gain is lower bounded by $\Omega (n \times \{1 + \left(\frac{1}{2}+ o(1)\right) 
 \frac{1}{\log n} \log p_{\epsilon}^2 \})$  which
asymptotically tends to $n$. However, the maximum index coding gain of $n$ UTs is also $n$. Therefore, this coding 
gain is also a tight bound. 
%\begin{equation}
%\label{gorbec2}
% C = n + \left(\frac{1}{2}+ o(1)\right) \log p_{\epsilon}^2 ~ \frac{n}{\log n}.
% \end{equation}
\end{proof}
\begin{rem}{\em
 {Theorem \ref{thmgain} presents the index coding  gain using a graph coloring algorithm which only  
 counts the number of cliques in the dependency graph. The gain in 
 Theorem \ref{indexlimit} 
 counts the number of disjoint cycles in the dependency graph. Theorem
 \ref{tightness} proves that
 index coding  gain
 in Theorem \ref{indexlimit} is $\Theta(n)$ which means that it is order optimal. 
 Theorem \ref{thmgain} is also proving the same result. Therefore, a graph 
 coloring algorithm can acheive order optimal 
 capacity gains.}
 }\label{compare_rem}
\end{rem}

\begin{algorithm}
    \caption{Multihop Caching-Aided Coded Multicasting}
    \label{alg:1}
    \begin{algorithmic}[1]
        \Procedure{Cache Placement}{}
        \State Use greedy cache placement algorithm in \cite{golrezaei2012femtocaching} 
        \Statex ~~~~to populate helper caches.
        \EndProcedure 
	\Statex
        \Procedure{Helper Assignment}{}
        \State Use a { greedy algorithm} to assign the closest helper   
        \Statex ~~~~which has the requested content, to the UT.
        \EndProcedure
   	\Statex
        \Procedure{Content Delivery}{}
        \State { Form} the dependency graph and conflict graph.
        \State { Color} the conflict graph by a { graph coloring}
        \Statex~~~~~ heuristic to find {independent sets} in dependency graph. 
        \For{ every independet set ${S}$ found} 
        \State Helper { multicasts} coded XOR of requested 
        \Statex~~~~~~~~ contents \(\oplus_{k\in{S}}~ M_{r_k}\). 
        \EndFor %END FOR S
        \EndProcedure
	\Statex       
        \Procedure{Content Decoding}{}
        \State UT $j$ XORs some of it's cached contents  with the
	\Statex ~~~~ received coded content \(\oplus_{k\in{S}}~ M_{r_k}\) to decode $M_{r_j}$
        \EndProcedure
    \end{algorithmic}
\end{algorithm}

\begin{rem}{\em  
 We have shown our proposed algorithm in pseudo-code in Algorithm \ref{alg:1}.
 As we mentioned earlier, in our solution we only focus on content delivery. 
 For cache placement, we assume that the greedy approximate algorithm 
 proposed in \cite{golrezaei2012femtocaching} is used to populate helper 
 caches. For helper assignment, the greedy algorithm is used. In other words,
 we suggest that the closest helper which has the content in its cache
 be assigned to the UT. During the content delivery 
 phase, we use our proposed heuristic for index coding. We use a graph coloring 
 heuristic for the conflict graph and find independent sets in the 
 dependency graph. Then for each independent set only one multicast
 transmission is needed which will be sent by the helper. Our main 
 contribution is to propose better content delivery algorithm compared to the 
 baseline. In the decoding phase, UTs use their cached contents and the 
 received content to decode their desired contents.
 }\label{rem_alg}
\end{rem}

\section{Simulations}
\label{discuss}
In this section, we will show our simulation results.
%Figure \ref{fig:sim1df}
%shows the probability that a specific content that the UTs are requesting be available in the  helper. As this plot suggests, 
%the probability of content availability in the helper approaches one as the Zipfian parameter increases. Notice that this 
%probability goes to one regardless of the number of available contents in the 
%network, number of UTs and other factors. However, 
%Figure \ref{fig:sim1df} suggests that this probability is slightly higher when
%the number of contents is smaller 
%and/or the helper has a larger cache size.  
%We denote the size of the helper cache by $R$ in the simulations.
%\begin{figure}
%\vspace{-0.1in}
%    \center
%      \includegraphics[scale=0.7,angle=0]{plot1.eps}
%%\vspace{-0.4in}
%      \caption{{Probability that UTs request contents that are not
%      available in their cache but are available in the helper cache. $R, 
%      \delta, n,$ and $m$ denote the size of helper's cache, the size of UT
%      cache, the number of UTs served by the helper, and the total number 
%      of contents in the network, respectively.}}
%    \label{fig:sim1df}
%\vspace{-0.4in}
%\end{figure}
To show the performance of our coding technique,
we have plotted the simulation results for five different sets of parameters in 
Figure \ref{fig:sim1de}.
In this simulation, we assume that index coding is done in the packet level. 
We plotted the 
average packets sent in each transmission. We have assumed that the UTs are requesting contents based on a Poisson 
distribution with an 
average rate of $\frac{1}{n}$. This way we can assure that the average total request rate is one and we 
are efficiently using the time resource
without generating unstable queues. The optimum solution is an NP-hard problem.
However, 
we used a very simple heuristic algorithm to count the number of cliques and cycles of maximum size 4. 
Even with this simple algorithm, we were able to 
show that the index coding can double the average number of packets per transmission in each time slot for certain values
of the Zipfian parameter. Clearly, optimal index coding 
or more sophisticated algorithms can achieve better results compared to what we obtained by our simple algorithm. 
%Graph coloring algorithms can also b

For small values of $s$, the content request distribution is close to uniform, the dependency graph is very sparse and there is little benefit of using index coding. For large values of $s$, most UTs are requesting similar contents which results in broadcasting the same content to all nodes which is equal to one content per transmission. 
The main benefit of
index coding happens for  values of $s$ between 0.5 and 2 which is usually the 
case in practical networks. Note that a wireless distributed caching system with no index coding, will always have 
one content per transmission. 
%
%\vspace{-0.4in}
\begin{figure}[http]
    \center
      \includegraphics[scale=0.5,angle=0]{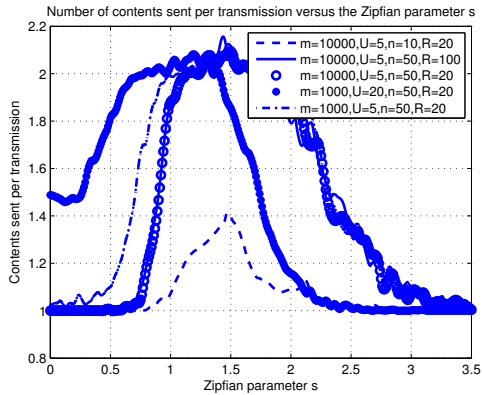}
      \caption{{Average number of contents transmitted in each time slot when using index coding as a function of
      the Zipfian parameter $s$ using a suboptimal search.}}
    \label{fig:sim1de}
\end{figure}

Figure \ref{fig_compare3} compares the average number of requests satisfied 
per time slot by a helper between our proposed scheme and the baseline approach. 
For this simulation, a Zipfian content request distribution with parameter $s=2$
is assumed. The cell radius is 400 meters, D2D transmission range is assumed to 
be 100 meters and 1000 UTs are considered in this figure similar to the simulations 
in \cite{golrezaei2012femtocaching} and Figure \ref{fig_multihop}.
Notice that with the baseline approach at least 27 helpers are required to 
cover the entire network  and 24, 20, 14 and 10 helpers can cover 97\%, 93\%,
78\% and 62\% of the network, respectively. Using multihop D2D 
communications with a maximum of three hops, 4 helpers can cover the entire network 
while 3 helpers can cover 95\% of the network nodes. 
We have shown that even with our very simple
heuristic algorithm, 
multihop D2D can significantly improve the helper utilization ratio. Note that 
in baseline approach, each helper can at most transmit one content per
transmission, however, our approach can satisfy more than one request per
transmission by taking advantage of the side information that is
stored in nodes' caches. As the number of requests per user increases, there are 
more possibilities of creation of cliques of large size which results in increasing
the efficiency of helper nodes. 
The simulation is carried for content request probability of up to 0.3 since 
realistically, no more than one third of nodes at any given time,
request contents in the network.

\begin{figure}[http]
    \center
      \includegraphics[scale=0.5,angle=0]{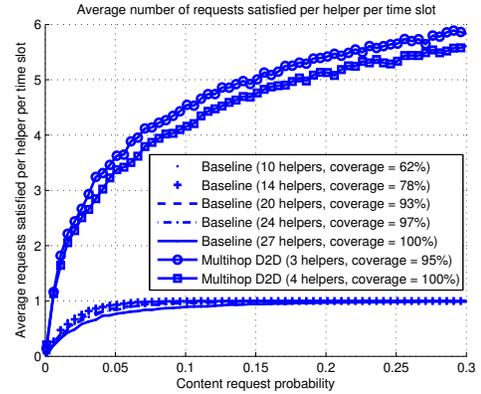}
      \caption{ Comparing our proposed and baseline solution on average 
      number of requests satisfied per time slot per helper versus 
      the content request probability.}
    \label{fig_compare3}
\end{figure}

\section{Conclusion}
\label{conclude} 

An efficient order optimal content delivery approach is proposed for future wireless cellular systems. We take advantage of femtocaches \cite{golrezaei2012femtocaching} and multihop communication using index coding. 
We proved that using index coding is very efficient technique by taking advantage of side information stored in UTs. 
Further, it was shown that under Zipfian content distribution, linear index coding could be order optimal. A heuristic  graph coloring algorithm is proposed that
 achieves  order optimal capacity bounds. Our simulation result demonstrates the gains that can be achieved with this approach. 

Our proposed scheme improves the efficiency of femtocaches by taking advantage
of multihop communications and index coding. One challenge of multihop 
communication is connectivity when nodes are moving fast. One solution is 
to divide the contents into equal smaller chunks and treat each chuck as a
content. Another challenge is the additional delay imposed by using multihop
communications.  However, it is not readily clear that baseline approach
provides lower delays. The 
reason is the fact that by using multihop communication, larger portion of
the cell can be covered when the same number of femtocaches are used in both
approaches as shown in Figure \ref{fig_multihop}. A more detailed comparison 
of mobility and delay in single hop and multihop communication is the subject
of future study.  Security, overhead, routing are other issues that should be
investigated in future works. 
  
\bibliographystyle{plain}
\bibliography{All-Papers}

\begin{IEEEbiography}
[{\includegraphics[width=1in,height=1.25in,clip,keepaspectratio]{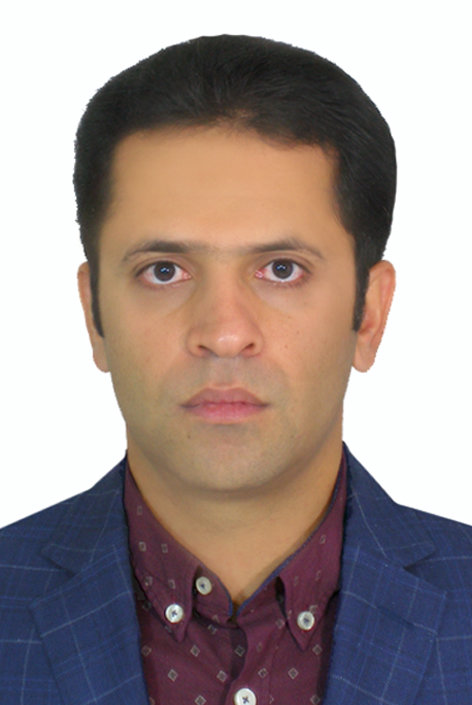}}]%
{Mohsen Karimzadeh Kiskani} received his bachelors degree in mechanical engineering from
Sharif University of Technology in 2008. He got his Masters degree
in electrical engineering from Sharif University of Technology in 2010. 
He got a Masters degree in computer science from University of California Santa Cruz in 2016.
He is currently a PhD candidate in electrical engineering department at University 
of California, Santa Cruz. His main areas of interest include wireless communications and 
information theory. He is also interested in the complexity study of Constraint Satisfaction 
Problems (CSP) in computer science.
\end{IEEEbiography}
\begin{IEEEbiography}
[{\includegraphics[width=1in,height=1in,clip,keepaspectratio]{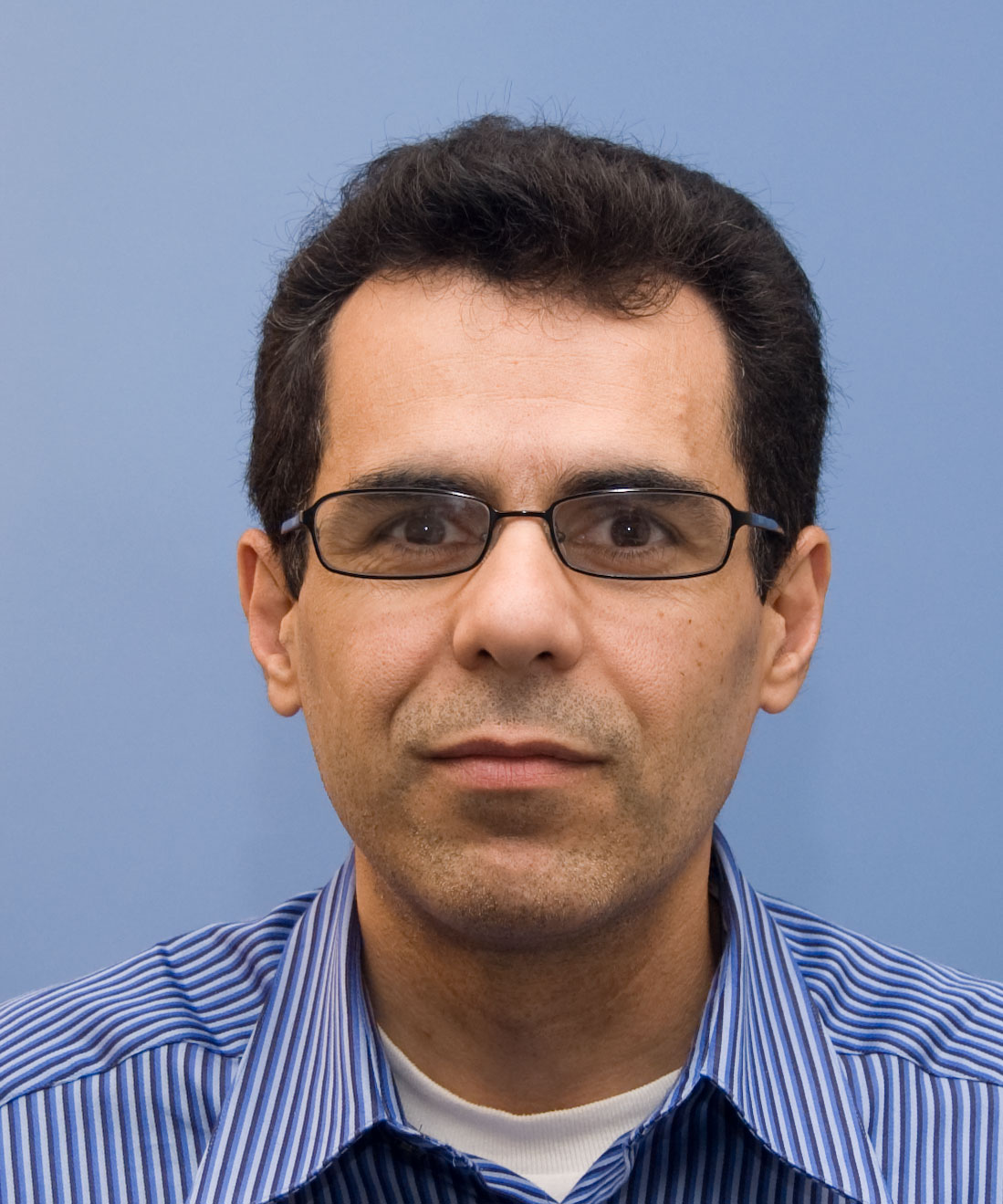}}]%
{Hamid Sadjadpour} (S’94-M’95-SM’00) received his B.S. and M.S. degrees from Sharif University of Technology and Ph.D. degree from University of Southern
California, respectively. After graduation, he joined AT\&T as a member of technical staff and finally Principal member of technical staff until 2001. In fall 2001, he joined University of California, Santa Cruz (UCSC) where he is now a Professor. Dr. Sadjadpour
has served as technical program committee member and chair in numerous conferences. He has published more than 170 publications
and has awarded 17 patents. His research interests are in the general area of wireless communications and networks. He is the co-recipient of best paper
awards at 2007 International Symposium on Performance Evaluation of Computer and Telecommunication Systems (SPECTS), 2008 Military Communications (MILCOM) 
conference, and 2010 European Wireless Conference. 
\end{IEEEbiography}
\end{document}